\newenvironment{myitemize}{\begin{itemize}}{\end{itemize}}
\newtheorem{theorem}{Theorem}
\newtheorem{lemma}[theorem]{Lemma}
\newtheorem{fact}[theorem]{Fact}
\newtheorem{property}[theorem]{Property}
\newcommand{\Ot}{\widetilde{O}}
\newcommand{\Ost}{O^*}
\newcommand{\eps}{\varepsilon}
\newcommand{\calC}{{\cal C}}
\newcommand{\TO}{,\ldots,}
\newcommand{\R}{\mbox{\rm I\hspace{-0.2em}R}}
\newcommand{\up}[1]{{\left\lceil{#1}\right\rceil}}
\newcommand{\IGNORE}[1]{}
\def\compactify{\itemsep=0pt \topsep=0pt \partopsep=0pt \parsep=0pt}
\let\latexusecounter=\usecounter
\newenvironment{itemize*}
  {\def\usecounter{\compactify\latexusecounter}
   \begin{itemize}}
  {\end{itemize}\let\usecounter=\latexusecounter}
\newenvironment{enumerate*}
  {\def\usecounter{\compactify\latexusecounter}
   \begin{enumerate}}
  {\end{enumerate}\let\usecounter=\latexusecounter}
\title{Dynamic Connectivity: Connecting to Networks and Geometry}
\author{
     Timothy M. Chan  \\[2pt] University of Waterloo
\\ {\tt\small tmchan@uwaterloo.ca}
\and Mihai P\v{a}tra\c{s}cu \\[2pt] MIT
\\ {\tt\small mip@mit.edu} 
\and Liam Roditty \\[2pt] Weizmann Institute
\\ {\tt\small liam.roditty@weizmann.ac.il}
}
\begin{document}

\maketitle

\begin{abstract}
Dynamic connectivity is a well-studied problem, but so far
the most compelling progress has been confined to the edge-update
model: maintain an understanding of connectivity in an undirected
graph, subject to edge insertions and deletions. In this paper, we
study two more challenging, yet equally fundamental problems:

\smallskip

{\bf Subgraph connectivity} asks to maintain an understanding of
connectivity under vertex updates: updates can turn vertices on and off,
and queries refer to the subgraph induced by \emph{on} vertices.  (For
instance, this is closer to applications in networks of routers, where
node faults may occur.)

We describe a data structure supporting vertex updates in
$\Ot(m^{2/3})$ amortized time, where $m$ denotes the number
of edges in the graph. This greatly improves over
the previous result {\rm [Chan, STOC'02]}, which
required fast matrix multiplication and had an update time 
of $O(m^{0.94})$.  The new data structure is also simpler.

\smallskip

{\bf Geometric connectivity} asks to maintain a dynamic set of $n$
geometric objects, and query connectivity in their intersection
graph.  (For instance, the intersection graph of balls describes
connectivity in a network of sensors with bounded transmission radius.)

Previously, nontrivial fully dynamic results were known only
for special cases like axis-parallel line segments and rectangles.
We provide similarly improved
update times, $\Ot(n^{2/3})$, for these special cases.
Moreover, we show how to obtain sublinear update bounds for
virtually {\em all\/} families 
of geometric objects which allow sublinear-time range queries. 
In particular, we obtain the {\em first\/} sublinear update time
for arbitrary 2D line segments: $\Ost(n^{9/10})$; for $d$-dimensional
simplices: $\Ost(n^{1-\frac{1}{d(2d+1)}})$; and
for $d$-dimensional balls: $\Ost(n^{1-\frac{1}{(d+1)(2d+3)}})$.

\end{abstract}

\thispagestyle{empty}
\newpage

\section{Introduction}

\subsection{Dynamic Graphs}

Dynamic graphs inspire a natural, challenging, and well-studied class
of algorithmic problems. A rich body of the STOC/FOCS literature has
considered problems ranging from the basic question of understanding
connectivity in a dynamic graph \cite{frederickson85connect,
henzinger99connect, thorup00connect, chan02subgraph,
patrascu07edgedel}, to maintaining the minimum spanning tree
\cite{holm01connect}, the min-cut \cite{thorup07mincut}, shortest
paths \cite{demetrescu03apsp, thorup05apsp}, reachability in directed
graphs \cite{demetrescu05reach, 
king99reach, king02reach, roditty04reach, sankowski04reach}, etc.

But what exactly makes a graph ``dynamic''? Computer networks have
long provided the common motivation. 
The
dynamic nature of such networks is captured by two basic types of
updates to the graph:
%
\begin{myitemize}
\item edge updates: adding or removing an edge. These correspond to
  setting up a new cable connection, accidental cable cuts, etc.

\item vertex updates: turning a vertex on and off. Vertices (routers)
  can temporarily become ``off'' after events such as a
  misconfiguration, a software crash and reboot, etc.  Problems
  involving only vertex updates have been called {\em dynamic
  subgraph} problems, since queries refer to the subgraph
  induced by vertices which are on.
\end{myitemize}

Loosely speaking, dynamic graph problems fall into two categories.
For ``hard'' problems, such as shortest paths and directed
reachability, the best known running times are at least linear in the
number of vertices. These high running times obscure the difference
between vertex and edge updates, and identical bounds are often
stated \cite{demetrescu03apsp, roditty04reach, sankowski04reach} for
both operations. For the remainder of the problems, sublinear
running times are known for edge updates, but sublinear
bounds for vertex updates seems much harder to get.   For instance,
even iterating
through all edges incident to a vertex may take linear time in the worst case.
That vertex
updates are slow is unfortunate.  Referring to the computer-network metaphor, 
vertex updates are cheap
``soft'' events (misconfiguration or reboot), which occur more
frequently than the costly physical events (cable cut) that cause an
edge update.

\paragraph{Subgraph connectivity.}
As mentioned, most previous sublinear dynamic graph algorithms address edge
updates but not the equally fundamental vertex updates.
One notable exception, however,
was a result of Chan \cite{chan02subgraph} from
STOC'02 on the basic connectivity
problem for general sparse (undirected) graphs. This algorithm
can support vertex updates in time%
\footnote{We use $m$ and $n$ to denote the number of edges and vertices of 
the graph respectively;
$\Ot(\cdot)$ ignores polylogarithmic factors and $\Ost(\cdot)$ hides
$n^\eps$ factors for an arbitrarily small constant $\eps>0$. 
  Update bounds in this paper are, by default, amortized.}
$O(m^{0.94})$ and decide whether two query vertices are connected
in time
$\Ot(m^{1/3})$.

Though an encouraging start, the nature of this result makes it appear more
like a half breakthrough. For one, the update time is only slightly
sublinear. Worse yet, Chan's algorithm requires fast matrix
multiplication (FMM\@). The
$O(m^{0.94})$ update time follows from the theoretical FMM algorithm
of Coppersmith and Winograd \cite{coppersmith89fmm}. If Strassen's
algorithm is used instead, the update time becomes $O(m^{0.984})$. 
Even if optimistically FMM could be done in quadratic time, the
update time would only improve to $O(m^{0.89})$.
FMM has been used before in various dynamic graph algorithms (e.g.,
\cite{demetrescu05reach, king02reach}), and
the paper \cite{chan02subgraph} noted specific connections 
to some matrix-multiplication-related problems (see Section~\ref{sec:related}).
All this naturally led one to suspect, as conjectured in the paper, that 
FMM might be essential to our problem.  
Thus, the result we are about to describe may come as
a bit of a surprise\ldots



\paragraph{Our result.}
In this paper, we present a new algorithm for 
dynamic connectivity,
achieving an improved vertex-update time of $\Ot(m^{2/3})$,
with an identical query time of $\Ot(m^{1/3})$.  First of all,
this is a significant
\emph{quantitative} improvement 
(to anyone
who regards an $m^{0.27}$ factor as substantial), and it represents
the first convincingly sublinear running time.  
More importantly,
it is a significant \emph{qualitative} improvement, as our
bound does not require FMM\@. 
Our algorithm involves a number of ideas, some of which can be traced back
to earlier algorithms, but we use known edge-updatable connectivity 
structures to maintain a more cleverly designed intermediate 
graph.  The end product is not straightforward at all, but still 
turns out to be \emph{simpler} than the previous method~\cite{chan02subgraph} 
and has a compact, two-page description (we regard this as another plus,
not a drawback). 


\subsection{Dynamic Geometry}

We next turn to another important class of dynamic connectivity problems---those
arising from geometry.

\paragraph{Geometric connectivity.}

%

Consider the following question, illustrated in
Figure~\ref{fig:drawing}(a).  Maintain a set of line segments in the
plane, under insertions and deletions, to answer queries of the form:
``given two points $a$ and $b$, is there a path between $a$ and $b$
along the segments?''

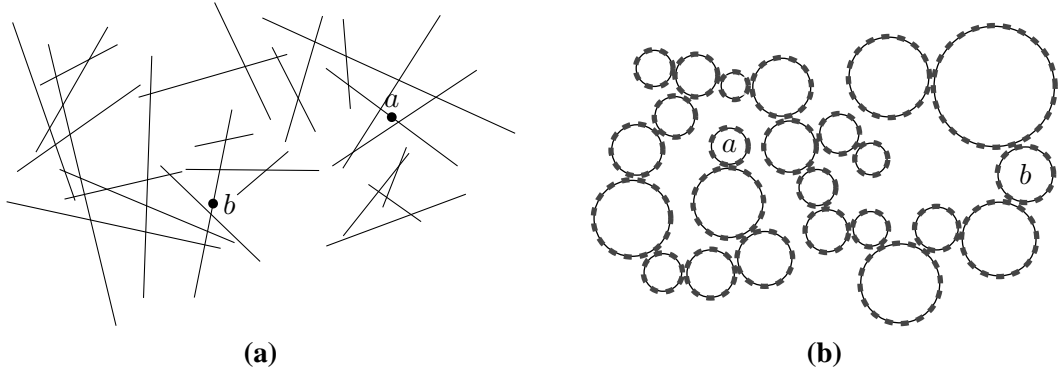
\begin{figure*}
  \centering
  \begin{tabular}{c@{\hspace{1cm}}c}
    \begin{tikzpicture}[scale=0.05]
\draw (-8.3, 58.2)  -- (8.3, 10.8);
\draw (37.7, 19.1) -- (73.2, 18.8);
\draw (25.2, 38.3) -- (64.7, 49.7);
\draw (-1, 41.5)  -- (19.6, 52.3);
\draw (51.4, 12.3)  -- (65.0, 23.9);
\draw (17.0, 57.0)  -- (-2.1, 23.7);
\draw (79.6, 59.1)  -- (81.5, 35.2);
\draw (47.0, -1.9)  -- (-9.9, 10.5);
\draw (31.0, 20.1)  -- (57.5, -5.4);
\draw (97.1, 23.4)  -- (79.6, 1.4);
\draw (40, -15) -- (50, 35);
  \fill[black] (45, 10) node[right] {$b$} circle (1.3);
\draw (1.2, 52.4)  -- (19.2, -22.5);
\draw (90.1, 9.0)  -- (96.4, 25.0);
\draw (4.2, 19.1)  -- (50.7, -0.4);
\draw (40.1, 25.2) -- (55.7, 28.5);
\draw (28.7, 49.3) -- (26.5, -15.1);
\draw (5.5, 11.2)  -- (36.8, 18.5);
\draw (110, 20)  -- (75, 46);
  \fill[black] (92.5, 33) node[above] {$a$} circle (1.3);
\draw (72.2, 29.1)  -- (60.7, 51.6);
\draw (115.1, 45.5) -- (76.7, 20.0);
\draw (-7.1, 18.5)  -- (25.7, 41.6);
\draw (112.2, 12.5) -- (75.1, -1.3);
\draw (58.1, 59.4)  -- (125.3, 28.7);
\draw (60.3, 32.3)  -- (45.4, 63.5);
\draw (105.4, 60.1) -- (79.5, 19.4);
\draw (64.2, 26.4)  -- (74.1, 59.9);
\draw (86.3, 15.1)  -- (100.3, 5.2);
    \end{tikzpicture}
    &
    \begin{tikzpicture}[scale=0.05]
    \draw[line width=0.5pt] (-0.8, 30.2) node[white] {0} circle (6.7);
      \draw[dashed, black!73, line width=2pt] (-0.8, 30.2) circle (6.9);
    \draw[line width=0.5pt] (61.5, 27.8) node[white] {1} circle (4.2);
      \draw[dashed, black!73, line width=2pt] (61.5, 27.8) circle (4.4);
    \draw[line width=0.5pt] (6.1, -2.4) node[white] {2} circle (5.0);
      \draw[dashed, black!73, line width=2pt] (6.1, -2.4) circle (5.2);
    \draw[line width=0.5pt] (9.3, 39.1) node[white] {5} circle (5.3);
      \draw[dashed, black!73, line width=2pt] (9.3, 39.1) circle (5.5);
    \draw[line width=0.5pt] (15.0, 50.0) node[white] {7} circle (5.5);
      \draw[dashed, black!73, line width=2pt] (15.0, 50.0) circle (5.7);
    \draw[line width=0.5pt] (69.2, -5.3) node[white] {10} circle (10.5);
      \draw[dashed, black!73, line width=2pt] (69.2, -5.3) circle (10.7);
    \draw[line width=0.5pt] (-2.1, 12.0) node[white] {14} circle (10.1);
      \draw[dashed, black!73, line width=2pt] (-2.1, 12.0) circle (10.3);
    \draw[line width=0.5pt] (102.5, 23.7) node {$b$} circle (7.3);
      \draw[dashed, black!73, line width=2pt] (102.5, 23.7) circle (7.5);
    \draw[line width=0.5pt] (94.2, 47.1) node[white] {17} circle (16.0);
      \draw[dashed, black!73, line width=2pt] (94.2, 47.1) circle (16.2);
    \draw[line width=0.5pt] (23.8, 31.2) node {$a$} circle (4.8);
      \draw[dashed, black!73, line width=2pt] (23.8, 31.2) circle (5.0);
    \draw[line width=0.5pt] (33.2, 1.3) node[white] {20} circle (7.2);
      \draw[dashed, black!73, line width=2pt] (33.2, 1.3) circle (7.4);
    \draw[line width=0.5pt] (47.1, 20.2) node[white] {21} circle (4.8);
      \draw[dashed, black!73, line width=2pt] (47.1, 20.2) circle (5.0);
    \draw[line width=0.5pt] (95.5, 6.5) node[white] {23} circle (9.8);
      \draw[dashed, black!73, line width=2pt] (95.5, 6.5) circle (10.0);
    \draw[line width=0.5pt] (3.8, 51.9) node[white] {24} circle (4.8);
      \draw[dashed, black!73, line width=2pt] (3.8, 51.9) circle (5.0);
    \draw[line width=0.5pt] (23.5, 16.1) node[white] {25} circle (9.2);
      \draw[dashed, black!73, line width=2pt] (23.5, 16.1) circle (9.4);
    \draw[line width=0.5pt] (25.1, 47.3) node[white] {26} circle (3.5);
      \draw[dashed, black!73, line width=2pt] (25.1, 47.3) circle (3.7);
    \draw[line width=0.5pt] (53.0, 34.5) node[white] {27} circle (5.2);
      \draw[dashed, black!73, line width=2pt] (53.0, 34.5) circle (5.4);
    \draw[line width=0.5pt] (49.7, 8.7) node[white] {28} circle (5.6);
      \draw[dashed, black!73, line width=2pt] (49.7, 8.7) circle (5.8);
    \draw[line width=0.5pt] (37.7, 46.9) node[white] {32} circle (7.7);
      \draw[dashed, black!73, line width=2pt] (37.7, 46.9) circle (7.9);
    \draw[line width=0.5pt] (61.3, 9.2) node[white] {33} circle (4.6);
      \draw[dashed, black!73, line width=2pt] (61.3, 9.2) circle (4.8);
    \draw[line width=0.5pt] (18.8, -2.7) node[white] {35} circle (6.3);
      \draw[dashed, black!73, line width=2pt] (18.8, -2.7) circle (6.5);
    \draw[line width=0.5pt] (78.9, 9.3) node[white] {36} circle (5.7);
      \draw[dashed, black!73, line width=2pt] (78.9, 9.3) circle (5.9);
    \draw[line width=0.5pt] (66.2, 49.6) node[white] {38} circle (10.6);
      \draw[dashed, black!73, line width=2pt] (66.2, 49.6) circle (10.8);
    \draw[line width=0.5pt] (39.9, 31.1) node[white] {39} circle (6.8);
      \draw[dashed, black!73, line width=2pt] (39.9, 31.1) circle (7.0);
    \end{tikzpicture}
    \\
    {\bf (a)} & {\bf (b)} \\
  \end{tabular}

  \caption{ (a) Is $b$ reachable from $a$ staying on the roads? ~~
    (b) Do the gears transmit rotation from $a$ to $b$? }
  \label{fig:drawing}
\end{figure*}


This simple-sounding problem turns out to be a challenge.
On one
hand, understanding any local geometry does not seem to help, because
the connecting path can be long and windy. On the other hand, the
graph-theoretic understanding is based on the intersection graph,
which is too expensive to maintain. A newly inserted (or deleted)
segment can intersect a large number of objects in the set, changing
the intersection graph dramatically.

Abstracting away, we can consider a broad class of problems of the
form: maintain a set of $n$ geometric objects, and answer connectivity
queries in their intersection graph. Such graphs arise,
for instance, in VLSI applications in the case of orthogonal
segments, or gear transmission systems, in the
case of touching disks; see Figure~\ref{fig:drawing}(b).
A more
compelling application can be found in sensor networks: if $r$ is the
radius within which two sensors can communicate, the communication
network is the intersection graph of balls of radius $r/2$ centered
at the sensors. 
While our focus is on theoretical understanding rather than
the practicality of 
specific applications, these examples still indicate the
natural appeal of geometric connectivity problems.


All these problems have a trivial $\Ot(n)$ solution, by maintaining
the intersection graph through edge updates. A systematic approach to
beating the linear time bound was proposed in
Chan's paper as well~\cite{chan02subgraph}, by drawing a 
connection to
subgraph connectivity. Assume that a particular object type allows
data structures for intersection range searching with space $S(n)$ and
query time $T(n)$. It was shown that geometric
connectivity can essentially be solved by maintaining a graph of size $m =
O(S(n)+nT(n))$ and running $O(S(n)/n+T(n))$ vertex updates for every object insertion
or deletion.  Using the previous subgraph connectivity result
\cite{chan02subgraph}, an update in the geometric connectivity problem
took time $\Ot([S(n)/n + T(n)] \cdot [S(n)+nT(n)]^{0.94})$. Using our improved result,
the bound becomes $\Ot([S(n)/n + T(n)] \cdot [S(n)+nT(n)]^{2/3})$.

The prime implication in the previous paper is that connectivity of
axis-parallel boxes in any constant dimension (in particular,
orthogonal line segments in the plane) reduces to subgraph
connectivity, with a polylogarithmic cost. Indeed, for such
boxes range trees yield $S(n) = n
\cdot \lg^{O(d)} n$ and $T(n) = \lg^{O(d)} n$.
Unfortunately, while nontrivial range searching results are
known for many types of 
objects, very efficient range searching
is hard to come by. Consider our main motivating examples: 
%
\begin{myitemize}
\item for arbitrary (non-orthogonal) line segments in $\R^2$, one can achieve 
%
  $T(n) = \Ost(\sqrt{n})$ and $S(n) = \Ost(n)$, or
  $T(n) = \Ost(n^{1/3})$ and $S(n)=\Ost(n^{4/3})$ \cite{matousek92partition}.

\item for disks in $\R^2$, one can achieve $T(n) = \Ost(n^{2/3})$ and $S(n) =
  \Ost(n)$, or 
  $T(n) = \Ost(n^{1/2})$ and $S(n)=\Ost(n^{3/2})$ \cite{agarwal94range}.
\end{myitemize}
Even with our improved vertex-update time, 
the $[S(n)/n + T(n)] \cdot [S(n)+nT(n)]^{2/3}$
bound is too weak to beat the trivial linear
update time.  
For arbitrary line
segments in $\R^2$, one would need to improve the vertex-update time
to $m^{1/2 - \eps}$, which appears
unlikely without FMM (see Section~\ref{sec:related}).  
The line segment case was in fact
mentioned as a major open problem, implicitly in \cite{chan02subgraph}
and explicitly in \cite{afshani06connectivity}.
The situation gets worse
for objects of higher complexity or in higher dimensions.



\paragraph{Our results.}
In this paper, we are finally able to break the above barrier
for dynamic geometric connectivity.
At a high level, we show that range searching
with \emph{any} sublinear query time is enough to obtain sublinear
update time in geometric connectivity. In particular, we get the
\emph{first} nontrivial update times for arbitrary line segments in the
plane, disks of arbitrary radii, and
simplices and balls in any fixed dimension.
While the previous reduction \cite{chan02subgraph} involves merely a
straightforward usage of ``biclique covers'', 
our result here requires much more work.  For starters, we need to devise
a ``degree-sensitive'' version of our improved subgraph connectivity algorithm 
(which is of interest in itself); we then use this and known connectivity
structures to maintain not one but two carefully designed intermediate
graphs.

Essentially, if $T(n) = \Ot(n^{1-b})$ and $S(n) = \Ot(n)$, we can support
dynamic geometric connectivity with update time $\Ot \big( n^{1 - b^2
/ (2+b)} \big)$ and query time $\Ot\big( n^{b / (2+b)} \big)$. For
non-orthogonal line segments in $\R^2$, this gives an update time of
$\Ost(n^{9/10})$ and a query time of $\Ost(n^{1/5})$.
For disks in $\R^2$,
the update time is $\Ost(n^{20/21})$, with a query time of
$\Ost(n^{1/7})$.

Known range searching
techniques \cite{agarwal-erickson} from computational geometry almost
always provide sublinear query time. For instance,
Matou\v{s}ek~\cite{matousek92partition} showed that $b\approx 1/2$ is
attainable for line segments, triangles, and any constant-size
polygons in $\R^2$; more generally, $b\approx 1/d$ for
simplices or constant-size polyhedra in $\R^d$.  Further results by
Agarwal and Matou\v{s}ek~\cite{agarwal94range} yield $b\approx 1/(d+1)$ for
balls in $\R^d$.  Most generally, $b>0$ is possible for
\emph{any} class of objects defined by semialgebraic sets 
of constant description complexity.

\paragraph{More results.}
Our general sublinear results undoubtedly
invite further research into finding better
bounds for specific classes of objects.
In general, the complexity of range queries provides a natural
barrier for the update time, since upon inserting an object we at
least need to determine if it intersects any object already in the
set. Essentially, our result has a quadratic loss compared to
range queries: if $T(n) = n^{1-b}$, the update time is
$n^{1-\Theta(b^2)}$.

In Section~\ref{app:offline}, 
We make a positive step towards
closing this quadratic gap: 
we show that if the updates are given \emph{offline}
(i.e.~are known in advance), the amortized update time can be
made $n^{1-\Theta(b)}$. 
We need FMM this time, but the usage of FMM here is more intricate
(and interesting) than typical.  For one, it is crucial to use
fast {\em rectangular\/} matrix multiplication.  Along the way,
we even find ourselves rederiving Yuster and Zwick's sparse matrix 
multiplication result~\cite{yuster04FMM} in a more general form.  
The juggling of parameters is also more unusual, as one can suspect from
looking at our actual update bound, which is
$\Ot(n^{\frac{1+\alpha-b\alpha}{1+\alpha-b\alpha/2}})$, where
$\alpha=0.294$ is an exponent associated with rectangular FMM\@.


\section{Related Work}\label{sec:related}

Before proceeding to our new algorithms, we mention more
related work, for the sake of completeness.

\paragraph{Graphs.}
Most previous work on dynamic subgraph connectivity concerns special cases only.
Frigioni
and Italiano~\cite{frigioni00subgraph} considered vertex updates in
planar graphs, and described a polylogarithmic solution.

If vertices have constant degree, vertex updates are equivalent to
edge updates. For edge updates, Henzinger and
King~\cite{henzinger99connect} were first to obtain polylogarithmic
update times (randomized). This was improved by Holm et
al.~\cite{holm01connect} to a deterministic solution with $O(\lg^2 m)$
time per update, and by Thorup~\cite{thorup00connect} to a randomized
solution with $O(\lg m \cdot (\lg\lg m)^3)$ update time. The
randomized bound almost matches the $\Omega(\lg m)$ lower bound from
\cite{patrascu06loglb}.
All these data structures maintain a spanning forest as a certificate
for connectivity.
This
idea fails for vertex updates in the general case, since the 
certificate can change
substantially after just one update.

%
\IGNORE{
On motivational ground, one may ask whether supporting constant-degree
graphs is not enough for computer networks. It is indeed true that
physical, administrative, and cost constraints make the geographic
layout of a network look like a planar graph of small degree. However,
the large latency impact of going through the electric domain in a
router, coupled with low-cost fiber optics, have pushed for designs in
which the physical layer of the network is a more complex overlay
graph over the geographic graph. For example, a busy router in New
York might have direct optical connections (edges) both to a router in
Philadelphia, and one in Atlanta, even though the optical cable to
Atlanta passes physically through the New York--Philadelphia conduit.
}
In many practical settings, these planar-graph and constant-degree
special cases are unfortunately inadequate. In particular, large
networks of routers are often designed as overlay graphs over a
(small-degree) geographic graph. Long fiber-optic links bypass
intermediate nodes, in order to minimize the latency cost of passing
through the electric domain repeatedly.

For more difficult dynamic graph problems, the goal is typically changed
from getting polylogarithmic bounds to finding better exponents in
polynomial bounds; for example,
see all the papers on directed reachability
\cite{demetrescu05reach, king99reach, roditty04reach, sankowski04reach}.
Evidence suggests that dynamic subgraph connectivity fits this category.
It was observed~\cite{chan02subgraph} that 
finding
triangles (3-cycles) or quadrilaterals (4-cycles) in directed graphs
can be reduced to $O(m)$ vertex updates.  Thus, an update bound
better than $\sqrt{m}$ appears unlikely without FMM,
since the best running time for
finding triangles without FMM is $O(m^{3/2})$, dating back to
STOC'77 \cite{itai78triangles}. Even with FMM, known results are only
slightly better: finding triangles and quadrilaterals takes time
$O(m^{1.41})$ \cite{alon97triangles} and $O(m^{1.48})$ 
\cite{yuster04triangles} respectively.  Thus, current knowledge
prevents an update bound better than $m^{0.48}$.

\paragraph{Geometry.}
It was shown~\cite{chan02subgraph} that subgraph connectivity can be
reduced to dynamic connectivity of axis-parallel line segments in 3
dimensions. Thus, as soon as one gets enough combinatorial richness in
the host geometric space, subgraph connectivity becomes the
\emph{only} possible way to solve geometric connectivity.

When the geometry is less combinatorially rich, it is possible to find
\emph{ad hoc} algorithms that do not rely on subgraph
connectivity. Special cases that have been investigated include the 
following:

\begin{myitemize}
\item for orthogonal segments or axis-parallel
  rectangles in the plane, Afshani and
  Chan~\cite{afshani06connectivity} proposed a data structure with
  update time $\Ot(n^{10/11})$ and constant query time. This is
  incomparable to our result of update time $\Ot(n^{2/3})$ and query
  time $\Ot(n^{1/3})$.

\item for unit axis-parallel hypercubes, the problem 
  reduces to maintaining the minimum spanning tree under the
  $\ell_\infty$ metric. Eppstein~\cite{eppstein95emst} describes a
  general technique for dynamic geometric MST, ultimately appealing to
  range searching, and obtains polylogarithmic time per operation.

\item for unit balls, the problem reduces to dynamic Euclidean
  MST, which in turn reduces to range searching by Eppstein's
  technique~\cite{eppstein95emst}. In two dimensions, Chan's dynamic
  nearest-neighbor data structure~\cite{chan06dynNN} implies an
  $O(\lg^{10} n)$ update time for this problem.

\end{myitemize}

Dynamic geometric connectivity is a natural continuation of static
geometric connectivity problems, which have been studied since the
early 1980s. As in our case, the main challenge is to avoid working
explicitly with the intersection graph, which could be of quadratic
size. Known results include $O(n\lg n)$-time
algorithms~\cite{imai82intersect, imai83intersect} for computing the
connected components of axis-aligned rectangles in the plane, and
$\Ot(n^{4/3})$-time algorithms~\cite{guibas88intersect,
lopez95intersect} for arbitrary line segments in the plane. More
generally, Chan~\cite{chan02subgraph} (and later 
Eppstein~\cite{eppstein04bipart}) noted the connection of static
geometric connectivity to range searching, 
which
implied subquadratic algorithms for objects with constant description
complexity.  The connection carries over to 
the incremental (insertion-only) and decremental (deletion-only)
cases~\cite{chan02subgraph}, e.g., yielding $\Ot(n^{1/3})$ update time 
for arbitrary line segments, reproving and extending some older 
results~\cite{agarwal96connected}.

Another related problem is maintaining 
connectivity in
the kinetic setting, where objects move continuously according to known
flight plans. See \cite{hershberger99kinetic,
hershberger01kinetic} for the case of axis-parallel boxes, and
\cite{guibas01kinetic} for unit disks.

\section{Dynamic Subgraph Connectivity with $\Ot(m^{2/3})$ Update Time}

In this section, we present our new method for the
dynamic subgraph connectivity problem: maintaining a subset $S$
of vertices in a graph $G$, under vertex insertions and deletions in $S$, 
so that we can decide whether any two query vertices are connected in 
the subgraph induced by $S$.  We will call the vertices in $S$ the 
{\em active\/} vertices.  For now, we assume that the graph $G$ itself
is static.

The complete description of the new method is given in the proof 
of the following theorem.
It is ``short and sweet'', especially if the reader compares with Chan's
paper~\cite{chan02subgraph}.  The previous method requires
several stages of development, addressing
the offline and semi-online special cases, along with the use of FMM---we
completely bypass these intermediate stages, and FMM, here.
Embedded below, one can find a number
of different ideas (some also used in~\cite{chan02subgraph}):
rebuilding periodically after a certain number of updates,
distinguishing ``high-degree'' features
from ``low-degree'' features (e.g., see \cite{alon97triangles,
yuster04triangles}), amortizing by splitting smaller subsets from
larger ones, etc.  The key lies in the definition of a new, yet
deceptively simple,
intermediate graph $G^*$, which is maintained by known
polylogarithmic data structures for dynamic connectivity under
edge updates \cite{henzinger99connect,
holm01connect, thorup00connect}.
Except for these known connectivity structures, the description is 
entirely self-contained.


\begin{theorem}\label{conn}
We can design a data structure for dynamic subgraph connectivity
for a graph $G=(V,E)$ with $m$ edges, having amortized vertex update time
$\Ot(m^{2/3})$, query time $\Ot(m^{1/3})$, and preprocessing time
$\Ot(m^{4/3})$.
\end{theorem}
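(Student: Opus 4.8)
I would reduce the problem to ordinary \emph{edge}-update dynamic connectivity on an auxiliary graph $G^*$ with only $\Ot(m)$ edges, so that the known polylogarithmic structures of Holm et al.\ and Thorup can be run on $G^*$ as a black box; the whole task then becomes designing $G^*$ so that (a) two vertices lie in the same component of $G^*$ iff they lie in the same component of the induced subgraph $G[S]$ on the active set, and (b) each vertex update of $G$ changes only $\Ot(m^{2/3})$ edges of $G^*$. Fix a degree threshold $\Delta = m^{2/3}$ and call a vertex \emph{high} if it has degree $\ge \Delta$ in $G$ and \emph{low} otherwise; there are at most $2m/\Delta = O(m^{1/3})$ high vertices, and each low vertex has fewer than $\Delta$ incident edges. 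Low vertices cause no trouble: when a low vertex is toggled we can explicitly scan its $<\Delta$ incident edges and insert or delete the corresponding edges of $G^*$, at cost $\Ot(\Delta) = \Ot(m^{2/3})$.

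\paragraph{The graph $G^*$.} On the side I would keep a second edge-update connectivity structure $\mathcal{L}$ on the subgraph induced by the \emph{active low} vertices (updated in $\Ot(\Delta)$ time per low toggle, exactly as above). For each current component of $\mathcal{L}$ create a \emph{supernode} in $G^*$ and attach every active low vertex of that component to its supernode. A high vertex $v$ is placed in $G^*$ when active, with an edge to each other active high vertex adjacent to it in $G$ (at most $O(m^{1/3})$ such edges, cheap to recompute on a high toggle) and an edge to the supernode of every $\mathcal{L}$-component it touches. Connectivity in $G^*$ then mirrors that of $G[S]$, and the total number of edges is $\Ot(m)$: the low--supernode edges number $O(n)$, the high--high edges $O(m)$, and the high--supernode edges are bounded by $\sum_{v\ \text{high}} \deg(v) = O(m)$.

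\paragraph{Updates and rebuilding.} When a low vertex is turned on, the $\mathcal{L}$-components it joins are simply linked in $G^*$; when a low vertex $u$ is turned off, its $\mathcal{L}$-component shatters into at most $\Delta$ pieces, and I would keep the old supernode for the largest piece and spawn fresh supernodes for the rest, re-attaching each smaller piece and re-pointing the incident high-vertex edges --- charged by the ``split the smaller half'' rule so that, over the $\le r$ toggles between consecutive rebuilds, the re-attachment cost stays within budget. Toggling a high vertex needs only its $\Ot(m^{1/3})$ high--high edges plus its edges to touched supernodes; the latter can be numerous, and this is the step I would absorb by \emph{rebuilding from scratch every $r = m^{2/3}$ updates}. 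A rebuild recomputes $\mathcal{L}$, the supernodes, all of $G^*$, and the auxiliary data (a graph search from each of the $O(m^{1/3})$ high vertices) that supports cheap high toggles in the interim; its cost is $\Ot(m^{4/3})$, which is $\Ot(m^{2/3})$ amortized over $r$ updates and also matches the claimed preprocessing bound. A query for $a,b$ locates each in $G^*$ --- via its supernode, or a short local search --- and asks $G^*$, in $\Ot(m^{1/3})$ time.

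\paragraph{Main obstacle.} The delicate point --- the reason this is ``not straightforward at all'' --- is property (b): a single high vertex may have degree $\Theta(m)$, and a single low deactivation may fragment a huge component, so in the worst case either event touches $\Omega(m)$ edges of $G^*$. Making both cheap in an amortized sense is exactly what forces the combination of ingredients flagged in the introduction: the high/low split (few high vertices, sparse low vertices), smaller-half charging for component splits, and periodic rebuilding with a carefully chosen period and precomputed auxiliary data. Pinning down the invariants of $G^*$ so that all of these amortized accountings balance simultaneously is the crux; once they do, the choice $\Delta = r = m^{2/3}$ gives update time $\Ot(m^{2/3})$, query time $\Ot(m^{1/3})$, and preprocessing $\Ot(m^{4/3})$.
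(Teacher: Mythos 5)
Your plan is structurally different from the paper's, and the part you flag yourself as ``the crux'' is in fact where it breaks. Two concrete gaps:

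\textbf{Merge--split thrashing of supernodes.} You maintain an auxiliary connectivity structure $\mathcal L$ on the \emph{active low} vertices and hang a supernode off each $\mathcal L$-component, re-pointing incident edges by a ``keep the larger piece'' rule. That amortization only works when components move monotonically in one direction. In a fully dynamic phase, turning a low cut vertex off and on again can split a big component into two comparable halves and re-merge them; each cycle re-points the smaller half, whose vertices can have total high-degree $\Theta(m)$, so two updates cost $\Theta(m)$, and over a phase of $r=m^{2/3}$ updates you pay $\Theta(m^{5/3})$ --- i.e.\ $\Theta(m)$ amortized, not $\Ot(m^{2/3})$. Rebuilding every $r$ updates does not repair this: it bounds the one-time rebuild cost, not the thrashing inside a phase. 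What the paper does to avoid exactly this is split the \emph{active} vertices by \emph{time}, not only by degree: $P$ receives all vertices from earlier phases and undergoes \emph{deletions only} until the next rebuild, while all insertions go to a small set $Q$ with $|Q|\le q=m/\Delta$. The smaller-half argument is then applied only to the deletion-only $P$, where it is valid.

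\textbf{Cost of toggling a high vertex.} Activating a high vertex $v$ requires edges to every supernode whose component contains an active low neighbor of $v$; there can be $\Theta(\deg(v))=\Theta(m)$ of these, and a rebuild-time ``graph search from each high vertex'' is stale the moment the supernode partition changes. Nothing in your scheme bounds this single-operation cost. The paper sidesteps it with a device you have no analogue of: a table $C[u,v]$ counting the number of \emph{low} components of $P$ adjacent to both $u$ and $v$, so that $G^*$ needs an edge $uv$ between vertices of $Q$ iff $C[u,v]>0$; together with the fact that $P$ has only $O(m/\Delta)$ \emph{high} components (here ``high'' is a property of a \emph{component}, total degree $>\Delta$, not of a vertex), a vertex update in $Q$ touches only $O(q+m/\Delta)$ edges of $G^*$. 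Maintaining $C$ under a deletion in $P$ costs $\Ot(\Delta^2)$ for a low component, which is why the paper balances $\Delta^2$ against $m/\Delta$ and takes $\Delta=m^{1/3}$; your choice $\Delta=m^{2/3}$ reflects a cost structure that the scheme as described does not actually achieve.

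In short: the ideas you list (high/low split, smaller-half charging, periodic rebuilding) are indeed all present in the paper, but the specific intermediate graph you propose does not make them compose. The missing ingredients are the deletion-only/insertion-buffer decomposition $P/Q$ of the active set, the classification of \emph{components} of $P$ (not vertices) as high or low, and the counting matrix $C[\cdot,\cdot]$ that collapses low components into direct $Q$--$Q$ edges. You correctly identified where the difficulty lies, but left it open rather than resolving it.
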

\begin{proof}
We divide the update sequence into phases, each consisting of $q :=
m/\Delta$ updates.  The active vertices are partitioned into two sets $P$
and $Q$, where $P$ undergoes only deletions and $Q$ undergoes both
insertions and deletions.  Each vertex insertion is done to $Q$.  At
the end of each phase, we move the elements of $Q$ to $P$ and reset
$Q$ to the empty set.  This way, $|Q|$ is kept at most $q$ at all
times.

Call a
connected component in (the subgraph induced by) $P$ {\em high\/} if
the sum of the degrees of its vertices exceeds $\Delta$, and {\em low\/}
otherwise.  Clearly, there are at most $O(m/\Delta)$ high components.

\paragraph{The data structure.}
\begin{myitemize}
\item We store the components of $P$ in a 
data structure for decremental (deletion-only) connectivity 
that supports edge deletions in polylogarithmic amortized time.
\item We maintain a bipartite multigraph $\Gamma$ between $V$
and the components $\gamma$ in $P$: for each $uv\in E$
where $v$ lies in component $\gamma$, we create a copy of
an edge $u\gamma\in\Gamma$.
\item
For each vertex pair $u$,$v$,
we maintain the value $C[u,v]$ defined as
the number of low components in $P$ 
that are adjacent to both $u$ and $v$ in $\Gamma$.  (Actually,
only $O(m\Delta)$ entries of $C[\cdot,\cdot]$ are nonzero 
and need to be stored.)
\item
We define a graph $G^*$ whose vertices are the vertices of $Q$
and components of $P$:
\begin{enumerate}
\item[(a)] For each $u,v\in Q$,
if $C[u,v]>0$, then create an edge $uv\in G^*$.
\item[(b)] For each vertex $u\in Q$ and high component $\gamma$ in $P$,
if $u\gamma\in\Gamma$, then create an edge $u\gamma\in G^*$.
\item[(c)] For each $u,v\in Q$,
if $uv\in E$, then create an edge $uv\in G^*$.
\end{enumerate}
We maintain $G^*$ in another data structure for 
dynamic connectivity supporting polylogarithmic-time edge updates.
\end{myitemize}

\paragraph{Justification.}
We claim that two vertices of $Q$ are connected in the subgraph induced
by the active vertices in $G$ iff they
are connected in $G^*$.  The ``if'' direction is obvious.
For the ``only if'' direction, suppose two vertices $u,v\in Q$ are 
``directly'' connected in $G$ by being adjacent to a common 
component $\gamma$ in $P$.  If $\gamma$ is high, then edges
of type~(b) ensure that $u$ and $v$ are connected in $G^*$.  If
instead $\gamma$ is low, then edges of type (a) ensure that $u$ and $v$
are connected in $G^*$.  
By concatenation, the argument extends to show that any two
vertices $u,v\in Q$ connected by a path in $G$ are connected in $G^*$.

\paragraph{Queries.}  
Given two vertices $v_1$ and $v_2$, if both are in $Q$, 
we can simply test whether they are connected in~$G^*$.

If instead $v_j\ (j\in\{1,2\})$ is in a high component $\gamma_j$, then we
can replace $v_j$ with any vertex of $Q$ adjacent to $\gamma_j$ in $G^*$.
If no such vertex exists, then because of type-(b) edges,
$\gamma_j$ is an isolated component
and we can simply test whether $v_1$ and $v_2$ are both in the same 
component of $P$.

If on the other hand $v_j$ is in a low component $\gamma_j$,
then we can exhaustively search for a vertex in $Q$ adjacent to 
$\gamma_j$ in $\Gamma$, in $\Ot(\Delta)$ time,
and replace $v_j$ with such a vertex.  
Again if no such vertex exists, then $\gamma_j$ is an isolated component
and the test is easy.  The query cost is $\Ot(\Delta)$.

\paragraph{Preprocessing per phase.}  
At the beginning of each phase,
we can compute the multigraph $\Gamma$ in $\Ot(m)$ time.  We can
compute the matrix $C[\cdot,\cdot]$ in $\Ot(m\Delta)$ time, by examining
each edge $ v\gamma \in\Gamma$ and each of the $O(\Delta)$ vertices $u$
adjacent to a low component $\gamma$
and testing whether $\gamma u\in\Gamma$.  The graph $G^*$ can then
be initialized.  The cost per phase is $\Ot(m\Delta)$.
We can cover this cost by charging every update operation
with amortized cost $\Ot(m\Delta/q) = \Ot(\Delta^2)$.

\paragraph{Update of a vertex $u$ in $Q$.}  
We need to update $O(q)$ edges of
types~(a) and (c), and $O(m/\Delta)$ edges of type~(b) in $G^*$.
The cost is $\Ot(q + m/\Delta) = \Ot(m/\Delta)$.

\paragraph{Deletion of a vertex from a low component $\gamma$ in $P$.}
The component $\gamma$ is split into a number of subcomponents.
Since the total degree in $\gamma$ is $O(\Delta)$, we can update
the multigraph $\Gamma$ in $\Ot(\Delta)$ time.
Furthermore, we can update the matrix $C[\cdot,\cdot]$ in $\Ot(\Delta^2)$
time, by examining each vertex pair $u,v$ adjacent to $\gamma$ and
decrementing $C[u,v]$ if $u$ and $v$ lie in different subcomponents.
Consequently, we need to update $O(\Delta^2)$ edges of type~(a).  
The cost is $\Ot(\Delta^2)$.

\paragraph{Deletion of a vertex from a high component $\gamma$ in $P$.}
The component $\gamma$ is split into a number of subcomponents
$\gamma_1\TO\gamma_\ell$ with, say, $\gamma_1$ being the largest.
We can update the multigraph $\Gamma$ in time $\Ot(\deg(\gamma_2)+
\cdots+\deg(\gamma_\ell))$ by splitting the smaller subcomponents from
the largest subcomponent.  Consequently, we need to update
$O(\deg(\gamma_2)+\cdots+\deg(\gamma_\ell))$ edges
of type (b) in $G^*$.  Since $P$ undergoes deletions only,
a vertex can belong to the
smaller subcomponents in at most $O(\lg n)$ splits over the entire
phase, and so the total cost per phase is $\Ot(m)$, which
is absorbed in the preprocessing cost of the phase.

For each low subcomponent $\gamma_j$, we update the matrix 
$C[\cdot,\cdot]$ in $\Ot(\deg(\gamma_j)\Delta)$ time, by examining
each edge $\gamma_j v\in\Gamma$ and each of the $O(\Delta)$ vertices $u$
adjacent to $\gamma_j$
and testing whether $\gamma_j u\in\Gamma$.  Consequently, we need
to update $O(\deg(\gamma_j)\Delta)$ edges of type~(a) in $G^*$.  Since
a vertex can change from being in a high component to a low component
at most once over the entire phase, the total cost per phase
is $\Ot(m\Delta)$, which is absorbed by the preprocessing cost.

\paragraph{Finale.}
The overall amortized cost per update
operation is $\Ot(\Delta^2+m/\Delta)$.
Set $\Delta=m^{1/3}$.
\end{proof}

Note that edge insertions and deletions in $G$ can be accomodated
easily (e.g., see Lemma~\ref{conn-deg} of the next section).


\section{Dynamic Geometric Connectivity with Sublinear Update Time}
\label{sec:geom}

In this section, we investigate geometric connectivity
problems: maintaining a set $S$ of $n$ objects, under insertions and
deletions of objects, so that we can decide whether two query
objects are connected in the intersection graph of~$S$.  
(In particular, we can
decide whether two query points are connected in the union of $S$
by finding two objects containing the two points, via range
searching, and testing connectedness for these two objects.)

By the biclique-cover technique from \cite{chan02subgraph},
the result from the previous section immediately implies
a dynamic connectivity method for
axis-parallel boxes with $\Ot(n^{2/3})$
update time and $\Ot(n^{1/3})$ query time
in any fixed dimension.

Unfortunately, this technique
is not strong enough to lead to sublinear results for other objects,
as we have explained in the introduction.
This is because (i)~the size of the maintained graph, $m=O(S(n)+nT(n))$,
may be too large and (ii)~the number of vertex updates 
triggered by an object update, $O(S(n)/n + T(n))$,
may be too large.

We can overcome the first obstacle by using a different strategy
that rebuilds the graph more often to keep it sparse; this is not
obvious and will be described precisely later during the proof of
Theorem~\ref{geomconn}.  The second obstacle is even more critical:
here, the key is to observe
that although each geometric update requires multiple vertex updates,
{\em many\/} of these vertex updates involves vertices of 
{\em low\/} degrees.  

\subsection{A degree-sensitive version of subgraph connectivity}

The first ingredient we need is a dynamic subgraph connectivity method that
works faster when the degree of the updated vertex is small.  
Fortunately, we can prove the following lemma, which
extends Theorem~\ref{conn} (if we set $\Delta=n^{1/3}$).  The method
follows that of Theorem~\ref{conn}, but with an extra
twist: not only do we classify components of $P$ as high or low,
but we also classify vertices of $Q$ as high or low.  

\begin{lemma}\label{conn-deg}
Let $1\le \Delta \le n$.
We can design a data structure for dynamic subgraph connectivity
for a graph $G=(V,E)$ with $m$ edges, having amortized vertex update time
\[ \Ot(\Delta^2 + \min\{m/\Delta, \deg(u)\}) \]
for a vertex $u$, query time $\Ot(\Delta)$,
preprocessing time $\Ot(m\Delta)$,
and amortized edge update time $\Ot(\Delta^2)$.
\end{lemma}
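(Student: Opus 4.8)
The plan is to reuse the machinery of Theorem~\ref{conn} almost verbatim and change only \emph{how a vertex update is processed}, as a function of the updated vertex's degree. I keep the partition into phases of $q := m/\Delta$ updates, the split of the active set into a deletion-only part $P$ and a part $Q$ with $|Q|\le q$, the classification of components of $P$ into \emph{high} (total degree $>\Delta$, so $O(m/\Delta)$ of them) and \emph{low}, the decremental-connectivity structure on $P$, the bipartite multigraph $\Gamma$ between $V$ and the components of $P$, and the matrix $C[\cdot,\cdot]$ counting low components of $P$ adjacent to both of a pair of vertices. The new ingredient is a second classification: call an active vertex \emph{big} if its degree exceeds $m/\Delta$ and \emph{small} otherwise; since degrees sum to $2m$, there are only $O(\Delta)$ big vertices. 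The intermediate graph $G^*$ now has as vertices the vertices of $Q$ together with \emph{all} components of $P$ (not just the high ones), and its edges are: (c) $uv$ for $u,v\in Q$ with $uv\in E$; (b) $u\gamma$ for big $u\in Q$ and high component $\gamma$ with $u\gamma\in\Gamma$; a new type (d) $u\gamma$ for \emph{small} $u\in Q$ and \emph{any} component $\gamma$ with $u\gamma\in\Gamma$; and (a) $uv$ for $u,v\in Q$ with $C[u,v]>0$ such that at least one of $u,v$ is big.

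To activate a \emph{big} vertex I do exactly what Theorem~\ref{conn} does: scan $Q$ to install its type-(a) and type-(c) edges and scan the $O(m/\Delta)$ high components to install its type-(b) edges, at cost $\Ot(m/\Delta+\Delta^2)=\Ot(\min\{m/\Delta,\deg(u)\}+\Delta^2)$, where the last term amortizes the per-phase rebuild and the $P$-deletion work as before. To activate a \emph{small} vertex $u$ I instead scan the $\le m/\Delta$ edges incident to $u$: for each neighbor $w$ I add a type-(c) edge if $w\in Q$, and a type-(d) edge $u\gamma$ if $w$ lies in a component $\gamma$ of $P$; then I scan the $O(\Delta)$ big vertices and, for each big $v\in Q$, add a type-(a) edge $uv$ when $C[u,v]>0$. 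This costs $\Ot(\deg(u)+\Delta+\Delta^2)=\Ot(\deg(u)+\Delta^2)$, and the $G^*$-degree of $u$ is $\Ot(\deg(u)+\Delta)$, so deactivation costs the same. Deletions from $P$ are handled as in Theorem~\ref{conn} (updating $\Gamma$, $C$, and the type-(a)/(b) edges), with the sole addition that when a component $\gamma$ splits I also re-route its $O(\deg(\gamma))$ type-(d) edges: for a low $\gamma$ this is an extra $\Ot(\Delta)$ per deletion, and for a high $\gamma$ it is absorbed, by the same "split off the smaller subcomponents" amortization, into the $\Ot(m)$-per-phase (or, for the $C$-matrix, $\Ot(m\Delta)$-per-phase) bound. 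Because type-(a) edges are kept only for pairs containing a big vertex, the $C$-matrix maintenance and preprocessing are literally unchanged. Queries are answered as in Theorem~\ref{conn}: a query vertex lying in a component $\gamma$ is relocated, in $\Ot(\Delta)$ time by scanning $N_\Gamma(\gamma)$, to an adjacent vertex of $Q$ (or the component is declared isolated).

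The correctness claim is that two active vertices are connected in the induced subgraph iff their $G^*$-representatives (themselves if in $Q$, else their $P$-component) are connected in $G^*$; the $\Leftarrow$ direction is immediate, and for $\Rightarrow$ one decomposes a connecting path into maximal stretches inside single components of $P$ (each represented by one $G^*$-node) separated by $Q$-vertices. The case that differs from Theorem~\ref{conn} is a stretch inside a \emph{low} component $\gamma$ entered from some $u\in Q$ and left to some $v\in Q$: if both $u$ and $v$ are small, type-(d) gives the $G^*$-path $u-\gamma-v$; if at least one is big, then $C[u,v]>0$ and type-(a) gives the edge $uv$; either way the two $Q$-vertices are linked, and the general case follows by concatenation. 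The step I expect to be the real obstacle is exactly this small/big interaction through low components: in Theorem~\ref{conn} every $Q$-vertex is processed symmetrically by scanning $Q$, so activation order is irrelevant, whereas here a small vertex must be processed by scanning its own short edge list and cannot afford to scan $Q$ — so one has to check that a small vertex activated \emph{after} a big vertex with which it shares a low component still gets connected to it (it does, because scanning the $O(\Delta)$ big vertices and consulting $C$ is cheap), and that promoting \emph{all} components of $P$ to explicit $G^*$-nodes neither breaks the high-component handling nor inflates any of the $\Ot(\Delta^2)$, $\Ot(m\Delta)$, and $\Ot(m)$ budgets. Handling edge updates in $G$ (and with them vertices crossing the degree thresholds) is a routine coda: an edge insertion or deletion touches $O(1)$ edges of $\Gamma$, $O(\Delta)$ entries of $C$, and a corresponding $O(\Delta)$ edges of $G^*$, and the big/small and high/low classifications are refreshed at phase boundaries.
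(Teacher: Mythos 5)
Your construction of $G^*$, the big/small vertex classification with threshold $m/\Delta$, the four edge types, the path-decomposition correctness argument, the query relocation, and all the cost accounting match the paper's proof essentially verbatim (your trimming of type-(b) to big $u$ is an inessential redundancy removal, since type-(d) already covers small $u$ adjacent to high $\gamma$). The one place that does not hold up is the ``routine coda'' on edge updates: you claim a direct edge insertion $uv$ touches only $O(1)$ edges of $\Gamma$, $O(\Delta)$ entries of $C$, and $O(\Delta)$ edges of $G^*$, but this fails when both $u$ and $v$ lie in $P$ in \emph{distinct} components, since inserting the edge merges those components --- an operation the decremental structure on $P$ cannot perform, and one that can change $\Theta(\Delta^2)$ entries of $C$ when both components are low. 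The paper avoids this entirely by never modifying $G$ or $P$: an edge insertion $uv$ is simulated by inserting a fresh degree-2 vertex $z$ into $Q$ adjacent to $u$ and $v$, so the edge update is just a low-vertex update of degree $2$, immediately giving the claimed amortized $\Ot(\Delta^2)$ bound; an edge deletion deletes $z$. You should replace your direct edge-update sketch with this phantom-vertex simulation.
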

\begin{proof}
The data structure 
is the same as in the proof of Theorem~\ref{conn},
except for one difference: the definition of the graph $G^*$.

Call a vertex {\em high\/} if its degree exceeds $m/\Delta$, and {\em
low\/} otherwise.  Clearly, there are at most $O(\Delta)$ high vertices.
\begin{myitemize}
\item
We define a graph $G^*$ whose vertices are the vertices of $Q$
and components of $P$:
\begin{enumerate}
\item[(a$'$)] For each high vertex $u\in Q$ and each vertex $v\in Q$,
if $C[u,v]>0$, then create an edge $uv\in G^*$.
\item[(b)] For each vertex $u\in Q$ and high component $\gamma$ in $P$,
if $u\gamma\in\Gamma$, then create an edge $u\gamma\in G^*$.
\item[(b$'$)] For each low vertex $u\in Q$ and each component $\gamma$ in $P$,
if $u\gamma\in\Gamma$, then create an edge $u\gamma\in G^*$.
\item[(c)]  For each $u,v\in Q$, if $uv\in E$, then
create an edge $uv\in G^*$.
\end{enumerate}
We maintain $G^*$ in a data structure
for dynamic connectivity with polylogarithmic-time edge updates.
\end{myitemize}

\paragraph{Justification.}
We claim that two vertices of $Q$ are connected in the subgraph induced
by the active vertices in $G$ iff they
are connected in $G^*$.  The ``if'' direction is obvious.
For the ``only if'' direction, suppose two vertices $u,v\in Q$ are 
``directly'' connected in $G$ by being adjacent to a common 
component $\gamma$ in $P$.  If $\gamma$ is high, then edges
of type~(b) ensure that $u$ and $v$ are connected in $G^*$.  If
$u$ and $v$ are both low, then edges of type (b$'$) ensure that $u$ and $v$
are connected in $G^*$.  In the remaining case, at least one of
the two vertices, say, $u$ is high, and $\gamma$ is low; here, edges
of type~(a$'$) ensure that $u$ and $v$ are again connected in $G^*$.
The claim follows by concatenation.

\paragraph{Queries.}  
Given two vertices $v_1$ and $v_2$, if both are in $Q$, 
we can simply test whether they are connected in $G^*$.
If instead $v_j\ (j\in\{1,2\})$ is in a component $\gamma_j$, then we
can replace $v_j$ with any vertex of $Q$ adjacent to $\gamma_j$ in $G^*$.
If no such vertex exists, then because of type-(b$'$) edges,
$\gamma_j$ can only be adjacent to high vertices of $Q$.
We can exhaustively search for a high vertex in $Q$ adjacent to 
$\gamma_j$ in $\Gamma$, in $\Ot(\Delta)$ time,
and replace $v_j$ with such a vertex.  
If no such vertex exists, then $\gamma_j$ is an isolated component
and we can simply test whether $v_1$ and $v_2$ are both in $\gamma_j$.
The cost is $\Ot(\Delta)$.

\paragraph{Preprocessing per phase.}  
At the beginning of each phase, the cost to preprocess
the data structure is $\Ot(m\Delta)$ as before.
We can charge every update operation
with an amortized cost of $\Ot(m\Delta/q) = \Ot(\Delta^2)$.

\paragraph{Update of a high vertex $u$ in $Q$.}  
We need to update $O(q)$ edges of
types~(a$'$) and (c), and $O(m/\Delta)$ edges of type~(b) in $G^*$.
The cost is $\Ot(q + m/\Delta) = \Ot(m/\Delta)$.

\paragraph{Update of a low vertex $u$ in $Q$.}  
We need to update $O(\Delta)$ edges of
type~(a$'$), and  $O(\deg(u))$ edges of types (b), (b$'$), and (c) in $G^*$.
The cost is $\Ot(\deg(u)+\Delta)$.

\paragraph{Deletion of a vertex from a low/high component $\gamma$ in $P$.}
Proceeding exactly as in the proof of Theorem~\ref{conn}, 
we can update the data structure with
amortized cost $\Ot(\Delta^2)$.

\paragraph{Edge updates.}  
We can simulate the insertion of an edge $uv$ by
inserting a new low vertex $z$ adjacent to only $u$ and $v$ to $Q$.
Since the degree is 2, the cost is $\Ot(1)$.  We can later
simulate the deletion of this edge by deleting the vertex $z$ from $Q$.
\end{proof}

\subsection{Range searching tools from geometry}

Next, we need known range searching techniques.  These techniques give
linear-space data structures ($S(n)=\Ot(n)$) that can retrieve all objects
intersecting a query object in sublinear time ($T(n)=\Ot(n^{1-b})$)
for many types of geometric objects.  We assume that 
our class of geometric objects satisfies the following 
property for some constant $b>0$---this 
property neatly summarizes all we need to know from geometry.

\begin{property}\label{prop:geom}
Given a set $P$ of $n$ objects, we can form a collection $\calC$
of {\em canonical subsets\/} of total size $\Ot(n)$,  in $\Ot(n)$ time, such that
the subset of all objects of $P$ intersecting any object $z$ 
can be expressed as the union of disjoint subsets in a subcollection $\calC_z$
of $\Ot(n^{1-b})$ canonical subsets, in $\Ot(n^{1-b})$ time.
Furthermore, for every $1\le \Delta\le n$,
\begin{enumerate}
\item[\rm (i)] the number of subsets in $\calC_z$ 
of size exceeding $n/\Delta$ is $\Ot(\Delta^{1-b})$.
\item[\rm (ii)] the total size of all subsets in $\calC_z$ of 
size at most $n/\Delta$ is $\Ot(n/\Delta^b)$.
\end{enumerate}
\end{property}

The property is typically proved by applying a suitable ``partition
theorem'' in a recursive manner, thereby forming a so-called
``partition tree''; for example, see the work by Matou\v
sek~\cite{matousek92partition} or the survey by Agarwal and
Erickson~\cite{agarwal-erickson}.  Each canonical subset corresponds to a
node of the partition tree (more precisely, the subset of all objects
stored at the leaves underneath the node).  Matou\v sek's results
imply that $b=1/d-\eps$ is
attainable for simplices or constant-size polyhedra in $\R^d$.
(To go from simplex range searching to intersection searching,
one uses multi-level partition trees; e.g., see 
\cite{matousek93hierarchical}.)
Further results by Agarwal and Matou\v sek~\cite{agarwal94range} yield
$b=1/(d+1)-\eps$ for balls in $\R^d$ and nontrivial values of $b$ for other
families of curved objects (semialgebraic sets of constant degree).  
The special case of axis-parallel boxes
corresponds to $b=1$.  

The specific bounds in (i) and (ii) may not be
too well known, but they follow from the hierarchical way in which
canonical subsets are constructed.  For example, (ii) follows since
the subsets in $\calC_z$ of size at most
$n/\Delta$ are contained in $\Ot(\Delta^{1-b})$ subsets of size 
$\Ot(n/\Delta)$.  In fact, (multi-level)
partition trees guarantee a stronger inequality, 
$\sum_{C\in\calC_z}|C|^{1-b}=\Ot(n^{1-b})$, from which
both (i) and (ii) can be obtained after a moment's thought.

As an illustration, we can use the above property to develop a data
structure for a special case of dynamic geometric connectivity where
insertions are done in ``blocks'' but arbitrary deletions are to be supported.  
Although the insertion time is at
least linear, the result is good if the block size $s$ is sufficiently
large.  This subroutine will make up a part of the final solution.

\begin{lemma}\label{block}
We can maintain the connected components among
a set $S$ of objects in a data structure that
supports insertion of a block of $s$ objects
in $\Ot(n+sn^{1-b})$ amortized time ($s<n$), and 
deletion of a single object in $\Ot(1)$ amortized time.
\end{lemma}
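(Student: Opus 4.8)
The plan is to maintain the intersection graph implicitly via the canonical subsets of Property~\ref{prop:geom}, and feed this into the degree-sensitive subgraph connectivity structure of Lemma~\ref{conn-deg}. The key observation is that when a block of $s$ objects is inserted, I do not want to touch the whole intersection graph. Instead, I recompute the collection $\calC$ of canonical subsets from scratch for the current set $S$ (which costs $\Ot(|S|) = \Ot(n)$ by Property~\ref{prop:geom}), and build an auxiliary bipartite graph $H$: on one side the objects of $S$, on the other side a \emph{representative vertex} $r_C$ for each canonical subset $C\in\calC$; I put an edge between object $x$ and $r_C$ whenever $x\in C$. Since the canonical subsets have total size $\Ot(n)$, the graph $H$ has $\Ot(n)$ edges. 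Two objects $x,y$ are adjacent in the intersection graph iff $y$ lies in some canonical subset of $\calC_x$, i.e.\ iff there is a canonical subset $C$ with $x$ adjacent to $r_C$ (because $C\subseteq\calC_x$ witnesses the intersection\,---\,here I use the ``each canonical subset is a clique of mutually-seeing objects'' flavor, or more carefully, I attach to $r_C$ the list of objects that generated $C$ and route adjacency through two-level structures as in the biclique-cover reduction) and $y\in C$. So connectivity in the intersection graph of $S$ is exactly connectivity among the object-vertices in $H$ when \emph{all} representative vertices are active. I run Lemma~\ref{conn-deg} on $H$ with all $\Ot(n)$ vertices active, choosing $\Delta$ to balance; with $\Delta = n^{b/(2+b)}$ or simply $\Delta$ set so that $\Delta^2$ is polylogarithmically small relative to the target, the per-vertex update in Lemma~\ref{conn-deg} costs $\Ot(\Delta^2 + \deg(u))$.

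The steps, in order: (1) On a block insertion of $s$ objects, add them to $S$, discard the old $\calC$ and recompute it for the new $S$ in $\Ot(n)$ time; then rebuild $H$. Rebuilding $H$ and re-initializing the Lemma~\ref{conn-deg} structure on it costs $\Ot(n\Delta)$ preprocessing, which for small $\Delta$ is $\Ot(n)$; alternatively, since we only need $\Delta$ to be polylogarithmic here (we are not trying for sublinear insertion), preprocessing is $\Ot(n)$. But we must also re-insert, as vertex updates, the objects that are currently deleted (marked off); this is where the $sn^{1-b}$ and the amortization come in\,---\,actually the cleaner accounting is: each of the $s$ newly inserted objects, when later processed as an active object-vertex, has degree $\Ot(n^{1-b})$ in $H$ (it lies in $\Ot(n^{1-b})$ canonical subsets, by Property~\ref{prop:geom}), so inserting it as a vertex update costs $\Ot(\Delta^2 + n^{1-b}) = \Ot(n^{1-b})$; summed over the block that is $\Ot(sn^{1-b})$, and the $\Ot(n)$ term is the recomputation of $\calC$. (2) On a single-object deletion, we simply turn off that object-vertex in the Lemma~\ref{conn-deg} structure\,---\,but a vertex \emph{deletion} of a vertex with possibly-large degree is not $\Ot(1)$. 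The fix is the standard deletions-only trick: we do \emph{not} delete; we leave the intersection-graph structure alone and instead maintain the connected components decrementally, i.e.\ we piggyback on the decremental-connectivity component of Lemma~\ref{conn-deg}'s machinery (the set $P$), so that object deletions only ever cause component \emph{splits}, each paid for in $\Ot(1)$ amortized by the ``split smaller from larger'' charging already inside Lemma~\ref{conn-deg}. Concretely: between two consecutive block insertions, no objects are inserted, so all activity is deletions, and we route the whole interval through the decremental side of the Lemma~\ref{conn-deg} data structure, whose amortized deletion cost per vertex is $\Ot(\Delta^2)=\Ot(1)$ for polylog $\Delta$. (3) Queries: answered in $\Ot(\Delta)=\Ot(1)$ via Lemma~\ref{conn-deg}, by first replacing a query point with an object containing it through a range query.

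The main obstacle I expect is step~(2): making single deletions cost $\Ot(1)$ amortized even though the vertex being deleted may have degree up to $\Ot(n^{1-b})$ in $H$. A naive vertex-off operation in Lemma~\ref{conn-deg} pays $\Ot(\deg(u))$. The resolution is to ensure that all objects present at the start of an inter-block interval sit in the decremental part $P$ of the Lemma~\ref{conn-deg} structure (this is exactly what a block-rebuild accomplishes: at rebuild time we place every currently-active object into $P$ and set $Q=\emptyset$), so that each subsequent deletion is a deletion \emph{from a component of $P$}, whose amortized cost is $\Ot(\Delta^2)$ with no $\deg(u)$ term, by the analysis in Lemma~\ref{conn-deg} (``deletion of a vertex from a low/high component'') and Theorem~\ref{conn}. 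Picking $\Delta$ to be polylogarithmic makes this $\Ot(1)$ amortized. The second, minor, subtlety is the $\Ot(n)$ term in the insertion bound: it is simply the cost of recomputing $\calC$ and reinitializing, which is $\Ot(n)$ by Property~\ref{prop:geom} and the $\Ot(m\Delta)=\Ot(n)$ preprocessing of Lemma~\ref{conn-deg} for polylog $\Delta$. Setting $\Delta$ exactly is postponed to the proof of Theorem~\ref{geomconn}, where a larger $\Delta$ will be chosen to trade the $\Ot(\Delta^2)$ deletion cost against other terms; here, correctness of the $\Ot(n+sn^{1-b})$ / $\Ot(1)$ split only requires $\Delta$ polylogarithmic.
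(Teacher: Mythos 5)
Your proposal takes a genuinely different route from the paper's, but it has gaps that I don't think can be patched without essentially reinventing the paper's construction.

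The paper does \emph{not} invoke Lemma~\ref{conn-deg} (or any subgraph-connectivity machinery) here. It maintains a multigraph $H$ whose vertices are just the objects of $S$, inside an ordinary polylog-per-edge dynamic connectivity structure. On a block insertion, it recomputes the canonical subsets for $S\cup B$, and for each canonical subset $C$ that is \emph{assigned} at least one object of $B$ (i.e.\ $C\in\calC_z$ for some $z\in B$), it strings all objects of $C$ together with all objects assigned to $C$ into a \emph{path}; these $\Ot(n+sn^{1-b})$ new edges are added to $H$ and never rebuilt. The $\Ot(1)$ amortized deletion then comes from splicing the deleted object out of each path it lies on ($O(1)$ edge changes per path) and charging the work, once the path has emptied, against the insertions that built it. Crucially, $H$ is allowed to become superlinear in size over time; correctness is just an invariant about connectivity, not about $H$ being the intersection graph.

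Concrete problems with your version. First, your bipartite graph is under-specified in a way that breaks either correctness or the size bound. Edges $x$--$r_C$ for $x\in C$ alone have total size $\Ot(n)$, but two objects being in the same canonical subset does \emph{not} mean they intersect; canonical subsets from a partition tree are not cliques of mutually-intersecting objects, so connectivity in this graph overstates geometric connectivity. The correct encoding (biclique cover) also needs edges $z$--$r_C$ for $C\in\calC_z$, but those number $\Theta(n\cdot n^{1-b})$ across all of $S$, which makes $H$ superlinear and breaks your $\Ot(n)$ rebuild claim. You acknowledge this tension (``route adjacency through two-level structures'') but don't resolve it, and resolving it is exactly where the work is. Second, treating Lemma~\ref{conn-deg} as a black box while simultaneously dictating that ``all currently-active objects sit in $P$'' and that the inter-block interval is one internal phase is not something the black box supports: its phases have length $q=m/\Delta$ tied to $m$ and $\Delta$, not to your external block size $s$, and its $\Ot(m\Delta)$ preprocessing is charged against $q$ updates of \emph{that} structure, not against your $s$ block insertions. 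Aligning these would require reopening the proof of Lemma~\ref{conn-deg}, at which point you've lost the modularity you were after. Third, you double-count the block insertion: you both ``rebuild $H$ from scratch'' and separately ``insert each new object as a vertex update of degree $\Ot(n^{1-b})$''; once $H$ is rebuilt the new objects are already present, so it's unclear which of these is meant to carry the $sn^{1-b}$ term. The clean way to get $\Ot(n+sn^{1-b})$ is the paper's: that quantity is literally the number of path edges created, and deletions are paid for by those same edges.
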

\begin{proof}
We maintain a multigraph $H$ in a data structure for
dynamic connectivity with polylogarithmic edge update time
(which explicitly maintains the connected
components), where the vertices are the objects of $S$.  This multigraph
will obey the invariant that two objects are geometrically connected
iff they are connected in $S$.  We do not insist that $H$ 
has linear size.

\paragraph{Insertion of a block $B$ to $S$.}
We first form a collection $\calC$ of canonical subsets for $S\cup B$ by
Property~\ref{prop:geom}.  For each $z\in B$ and each $C\in\calC_z$, we {\em
assign\/} $z$ to $C$.  For each canonical subset $C\in\calC$, if $C$ is
assigned at least one object of $B$, then we create new edges in $H$
linking all objects of $C$ and all objects assigned to $C$ in a path.
(If this path overlaps with previous paths, we create multiple copies
of edges.)  The number of edges inserted is thus $\Ot(n+ |B|
n^{1-b})$.

\paragraph{Justification.}
The invariant is satisfied since all objects in a canonical subset $C$
intersect all objects assigned to $C$, and are thus all connected
if there is at least one object assigned to $C$.

\paragraph{Deletion of an object $z$ from $S$.}
For each canonical subset $C$ containing or assigned the object $z$,
we need to delete at most 2 edges and insert 1 edge to maintain the
path.  As soon as the path contains no object assigned to $C$, we
delete all the edges in the path.  Since the length of the path can
only decrease over the entire update sequence, the total number of
such edge updates is proportional to the initial length of the path.
We can charge the cost to edge insertions.
\end{proof}

\subsection{Putting it together}

We are finally ready to present our sublinear result for dynamic geometric
connectivity.  We again need the idea of
rebuilding periodically,
and splitting smaller sets from larger ones.  In addition to
the graph $H$ (of superlinear size) from Lemma~\ref{block},
which undergoes insertions only in blocks, 
the key lies in the 
definition of another subtly crafted intermediate graph $G$
(of linear size), maintained
this time by the subgraph connectivity structure of Lemma~\ref{conn-deg}.
The definition of this graph involves multiple types of vertices and edges.
The details of the analysis and the setting of parameters get more interesting.

\begin{theorem}\label{geomconn}
Assume $0<b\le 1/2$.
We can maintain a collection of objects in 
amortized update time $\Ot(n^{1-b^2/(2+b)})$ and
answer connectivity queries in time $\Ot(n^{b/(2+b)})$.
\end{theorem}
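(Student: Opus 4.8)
The plan is to work in phases. Divide the update sequence into \emph{phases} of length $L := n^{(2-b)/(2+b)}$; inside a phase, call an object \emph{old} if it was present when the phase began and \emph{new} otherwise, so at most $L$ new objects are ever present simultaneously. Connectivity among the old objects is maintained by the block structure $H$ of Lemma~\ref{block}: a single block insertion of the $\le n$ old objects at the start of the phase costs $\Ot(n^{2-b})$, i.e.\ $\Ot(n^{2-b}/L) = \Ot(n^{1-b^2/(2+b)})$ amortized per update, while deletions of old objects during the phase are $\Ot(1)$ amortized. Since $H$ explicitly maintains its connected components, at every moment we have in hand the partition of the surviving old objects into components.

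On top of $H$ I would maintain a second, carefully designed intermediate graph $G$, fed to the degree-sensitive subgraph-connectivity structure of Lemma~\ref{conn-deg} with its parameter set to $\Delta := n^{b/(2+b)}$ (so that its $\Ot(\Delta)$ query time already meets the target). The vertices of $G$ come in three flavors: one vertex for each component of $H$; a \emph{canonical vertex} $s_C$ for each canonical subset $C$ produced by Property~\ref{prop:geom} on the old objects; and one vertex for each new object. Its edges record (i)~the incidences between each $s_C$ and the components of $H$ that meet $C$; (ii)~for every new object $z$, edges linking $z$ to the canonical subsets in $\calC_z$ (resolving the \emph{small} members of $\calC_z$ further, down to the at most $\Ot(n/\Delta^b)$ old objects they contain); and (iii)~an edge between any two intersecting new objects, found by brute force over the at most $L$ of them. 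The canonical vertices are switched on and off by the vertex updates of Lemma~\ref{conn-deg}, so $s_C$ is on exactly when some currently-present new object needs it --- this is what keeps $G$ from ever reporting a connection between two components of $H$ that does not hold. Property~\ref{prop:geom}(i)--(ii) is what makes this affordable: $z$ meets only $\Ot(\Delta^{1-b})$ \emph{large} canonical subsets, the \emph{small} ones have total size only $\Ot(n/\Delta^b)$, and the way $H$ is built keeps the canonical vertices at low degree, so that it is the $\min\{m/\Delta,\deg(\cdot)\}$ term of Lemma~\ref{conn-deg}, not $m/\Delta$, that is charged when such a vertex is toggled.

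For the accounting I would sum three kinds of cost. First, the rebuild of $H$ per phase: $\Ot(n^{2-b}/L)$ amortized. Second, the work on $G$ caused by each insertion or deletion of a new object, which --- combining Property~\ref{prop:geom}(i)--(ii) with the bounds of Lemma~\ref{conn-deg} for the vertex toggles and edge updates involved --- comes to $\Ot(n^{1-b}\Delta^2 + n/\Delta^b)$. Third, the repair of $G$ when a component of $H$ splits under an old-object deletion: the incident canonical vertices and new-object edges must be redistributed among the pieces, which I would do by rebuilding the data for all but the \emph{largest} piece, so each old object takes part in only $O(\log n)$ smaller-side repairs per phase and the whole cost is absorbed into the $\Ot(n^{2-b})$ rebuild budget. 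For all of this to fit, $G$ itself should be rebuilt on a shorter inner period of length $\Ot(\Delta^2)$, the new objects accumulated during an inner period being retired into $H$ by a block insertion at its end; a short calculation shows this keeps $G$ near-linear in size and keeps the $\Ot(m\Delta)$ preprocessing of Lemma~\ref{conn-deg} within budget. Collecting terms, the amortized update cost is
\[ \Ot\!\left( \frac{n^{2-b}}{L} \;+\; n^{1-b}\Delta^2 \;+\; \frac{n}{\Delta^b} \right), \]
and with $L = n^{(2-b)/(2+b)}$ and $\Delta = n^{b/(2+b)}$ the three terms balance at $\Ot(n^{1-b^2/(2+b)})$; queries reduce to locating the two query objects (a new object directly, an old object via its component vertex, a point via range searching) and a connectivity test in $G$, costing $\Ot(\Delta) = \Ot(n^{b/(2+b)})$.

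The hard part is the second paragraph: getting the definition of $G$ exactly right. Faithfulness is delicate because canonical subsets are not cliques in the intersection graph, so the canonical vertices and their on/off schedule must be chosen so that \emph{no} configuration ever certifies a connection that is false, while \emph{every} true connection --- among old objects, among new objects, and across the two --- is certified. At the same time $G$ must stay essentially linear in size and cheap to repair when $H$'s components split, which is precisely why both the large/small dichotomy of Property~\ref{prop:geom} and the degree-sensitive refinement of Lemma~\ref{conn-deg} are needed, rather than the plain biclique-cover reduction. Once the invariants are pinned down, the phase/inner-period bookkeeping and the smaller-side amortization are routine; pinning them down is the crux.
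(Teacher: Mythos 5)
Your high-level ingredients are exactly the paper's (Lemma~\ref{conn-deg} at $\Delta=n^{b/(2+b)}$, Lemma~\ref{block}, Property~\ref{prop:geom}(i)--(ii), phases, smaller-into-larger), and your final exponent is right, but the way you assemble them differs from the paper in ways that leave genuine gaps --- you even acknowledge you haven't pinned down the invariants of $G$, and that is precisely where the difficulties lie.

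First, the two-level scheme (outer phase $L=n^{(2-b)/(2+b)}$, inner period $\Delta^2$, ``retiring'' new objects into $H$ at each inner boundary) is both unnecessary and incorrect as stated. If at an inner boundary you fold new objects into $H$ but do not rebuild the canonical collection, those retired objects are in no canonical subset, so a subsequent new object intersecting only them has no certification path in $G$: false negatives. If instead you \emph{do} rebuild $\calC$ every $\Delta^2$ updates, then the smaller-into-larger amortization for the component-split repair breaks, since the $\Ot(n\log n)$ total repair work would now be spread over only $\Delta^2$ updates, each at $\Ot(\Delta^2)$ per edge, giving $\Ot(n)$ amortized. The paper avoids all of this by using a \emph{single} phase of length $y=n^b$: that already keeps $G$ at $\Ot(n)$ size (there are at most $y$ new objects contributing $\Ot(n^{1-b})$ edges each, i.e.\ $\Ot(n)$ total, and $\Ot(y^2)\le\Ot(n)$ pairwise edges), makes the per-phase preprocessing $\Ot(n\Delta)$ amortize to $\Ot(n^{1-b}\Delta)\le\Ot(n^{1-b}\Delta^2)$, and lets the $\Ot(n\log n)$ split-repair work amortize to $\Ot(n^{1-b}\Delta^2)$.

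Second, your edge type~(i), putting an edge directly between $s_C$ and each component of $H$ meeting $C$, is hard to maintain under component splits: when the last $C$-object migrates out of the surviving large piece $\gamma_1$, you must delete $s_C\gamma_1$, but detecting that event requires scanning $\gamma_1$, exactly what smaller-into-larger is supposed to avoid. The paper keeps a vertex per individual object of $X$ as a cushion: type~(a) edges component--object and type~(b) edges canonical-subset--object. Only type~(a) edges change at a split (and only for objects in the smaller pieces), while type~(b) edges are static for the whole phase; deleted objects become harmless dangling vertices. Third, your parenthetical ``resolving the small members of $\calC_z$ further, down to the $\Ot(n/\Delta^b)$ old objects they contain'' suggests inserting direct edges from $z$ to those objects/components; this would cost $\Ot((n/\Delta^b)\cdot\Delta^2)$ per object update via the $\Ot(\Delta^2)$-per-edge bound of Lemma~\ref{conn-deg}, which exceeds the budget. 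The whole point of the paper's construction is that small canonical subsets remain \emph{vertices} that are toggled on/off, so that each such toggle costs $\Ot(\Delta^2+\min\{m/\Delta,\deg(s_C)\})$; Property~\ref{prop:geom}(i) then bounds the contribution of the $\Ot(\Delta^{1-b})$ large $s_C$'s and Property~\ref{prop:geom}(ii) bounds the total degree of the small ones, giving $\Ot(n^{1-b}\Delta^2+n/\Delta^b)$ rather than the over-budget edge-by-edge charge.
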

\begin{proof}
We divide the update sequence into phases, each consisting of $y :=
n^b$ updates.  The current objects are partitioned into two sets $X$
and $Y$, where $X$ undergoes only deletions and $Y$ undergoes both
insertions and deletions.  Each insertion is done to $Y$.  At the end
of each phase, we move the elements of $Y$ to $X$ and reset $Y$ to the
empty set.  This way, $|Y|$ is kept at most $y$ at all times.

At the beginning of each phase, we form a collection $\calC$
of canonical subsets for $X$ by Property~\ref{prop:geom}.

\paragraph{The data structure.} 
\begin{myitemize}
\item
We maintain the components of $X$ in the data structure from
Lemma~\ref{block}.
\item
We maintain the following graph $G$ for
dynamic subgraph connectivity, where the vertices are objects
of $X\cup Y$,
components of $X$, and the canonical subsets of the current phase:
\begin{enumerate}
\item[(a)] Create an edge in $G$ between each component of $X$ and
each of its objects.
\item[(b)] Create an edge in $G$ between each canonical subset
and each of its objects in $X$.
\item[(c)] Create an edge in $G$ between each object $z\in Y$ and
each canonical subset $C\in\calC_z$.  
Here, we {\em assign\/} $z$ to $C$.
\item[(d)] Create an edge in $G$ between 
every two intersecting objects in $Y$.
\item[(e)] We make a canonical subset active in $G$ iff it is assigned
at least one object in $Y$.  Vertices that are objects
or components are always active.  
\end{enumerate}
Note that there are $\Ot(n)$ edges of types (a) and (b), 
$\Ot(yn^{1-b})$ edges of type~(c), and $O(y^2)$ edges of type~(d).
For $y=n^b$, the size of $G$ is thus $\Ot(n)$.
\end{myitemize}

\paragraph{Justification.}
We claim that two objects are geometrically connected in $X\cup Y$ iff 
they are connected in the subgraph induced by the active vertices in 
the graph $G$. 
The ``only if'' direction is obvious.  For the ``if'' direction,
we note that all objects in an active canonical subset $C$ intersect
all objects assigned to $C$ and are thus all connected.

\paragraph{Queries.}
We answer a query by querying in the graph $G$.
The cost is $\Ot(\Delta)$.

\paragraph{Preprocessing per phase.}
Before a new phase begins, we need to update the components in $X$
as we move all elements of $Y$ to $X$ (a block insertion).  
By Lemma~\ref{block},
the cost is $\Ot(n+yn^{1-b})=\Ot(n)$.  
We can now 
reinitialize the graph $G$ containing $\Ot(n)$ edges of types (a)
and (b) in $\Ot(n\Delta)$ time by Lemma~\ref{conn-deg}. 
We can charge every update operation an amortized cost of
$\Ot(n\Delta/y)=\Ot(n^{1-b}\Delta)$.
 
\paragraph{Update of an object $z$ in $Y$.}
We need to update $\Ot(n^{1-b})$ edges of type (c) and
$O(y)$ edges of type (d) in $G$.  
The cost according to Lemma~\ref{conn-deg} is $\Ot(n^{1-b}\Delta^2)$.

Furthermore, because of (e),
we may have to update the status of as many as $\Ot(n^{1-b})$
vertices.  The number
of such vertices of degree exceeding $n/\Delta$ is $\Ot(\Delta^{1-b})$
by Property~\ref{prop:geom}(i),
and the total degree among such vertices of degree at most $n/\Delta$
is $\Ot(n/\Delta^b)$ by Property~\ref{prop:geom}(ii).
Thus, according to Lemma~\ref{conn-deg},
the cost of these vertex updates is 
$\Ot(n^{1-b}\Delta^2 + \Delta^{1-b}\cdot n/\Delta + n/\Delta^b)
=\Ot(n^{1-b}\Delta^2 + n/\Delta^b)$.

\paragraph{Deletion of an object $z$ in $X$.}
We first update the components of $X$.  By Lemma~\ref{block},
the amortized cost is $\Ot(1)$.  We can now update
the edges of type~(a) in $G$.  The total number of such edge updates
per phase is $O(n\lg n)$, by always splitting smaller components
from larger ones.  The amortized number of edge updates is
thus $\Ot(n/y)$.  The amortized cost is $\Ot((n/y)\Delta^2)=\Ot(n^{1-b}\Delta^2)$.

\paragraph{Finale.} 
The overall amortized cost per update operation is
$\Ot(n^{1-b}\Delta^2 + n/\Delta^b)$.   Set $\Delta=n^{b/(2+b)}$.
\end{proof}

Note that we can still prove the theorem for $b>1/2$, by handling the
$O(y^2)$ intersections among $Y$ (the type~(d) edges) in a less naive way.
However, we are not aware of any specific applications with $b\in (1/2,1)$.

\section{Offline Dynamic Geometric Connectivity}  \label{app:offline}

For the special case of offline updates, we can improve the result of
Section~\ref{sec:geom} for small values of $b$ by a different method
using rectangular matrix multiplication.

Let $M[n_1,n_2,n_3]$ represent the
cost of multiplying a Boolean $n_1\times n_2$ matrix $A$ with
a Boolean $n_2\times n_3$ matrix $B$.
Let $M[n_1,n_2,n_3\,|\,m_1,m_2]$ represent the same
cost under the knowledge that the number of 1's in $A$ is $m_1$
and the number of 1's in $B$ is $m_2$.
We can reinterpret this task in graph terms:
Suppose we are given a tripartite graph
with vertex classes $V_1,V_2,V_3$ of sizes $n_1,n_2,n_3$ respectively
where there are $m_1$ edges between $V_1$ and $V_2$ and 
$m_2$ edges between $V_2$ and $V_3$.  Then $M[n_1,n_2,n_3\,|\,m_1,m_2]$
represent the cost of deciding, for each $u\in V_1$
and $v\in V_3$, whether $u$ and $v$ are adjacent
to a common vertex in $V_2$.

\subsection{An offline degree-sensitive version of subgraph
connectivity}

We begin with an offline variant of Lemma~\ref{conn-deg}:

\begin{lemma}\label{conn-off}
Let $1\le \Delta\le q\le m$.
We can design a data structure for offline dynamic subgraph connectivity
for a graph $G=(V,E)$ with $m$ edges and $n$ vertices, under
the assumption that $O(\Delta)$ vertices
are classified as {\em high\/} and at most $m_H$ edges are incident
to high vertices.  
Updates of a low vertex $u$ take amortized time
\[ \Ot(M[\Delta,n,q\,|\,m_H,m]/q ~+~ \deg(u)),\]
updates of high vertices take amortized time $\Ot(q)$,
queries take time $\Ot(\Delta)$,
and preprocessing takes time $O(M[\Delta,n,q\,|\,m_H,m])$.
\end{lemma}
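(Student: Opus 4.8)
The plan is to mirror the structure of Lemma~\ref{conn-deg}, keeping the same phase decomposition into $P$ (deletion-only) and $Q$ (fully dynamic, size $\le q$), the same bipartite multigraph $\Gamma$ between $V$ and components of $P$, and the same intermediate graph $G^*$ with edge types (a$'$), (b), (b$'$), (c). The only place where the online algorithm spent time we cannot afford offline is the computation of the matrix $C[\cdot,\cdot]$, which records, for each pair $u,v$, whether some low component of $P$ is adjacent to both. In the online setting this was recomputed incrementally at cost $\Ot(\Delta^2)$ per structural change; here we instead compute all the $C$-information for a whole phase in one batch using fast (rectangular) Boolean matrix multiplication, which is exactly what the quantity $M[\Delta,n,q\,|\,m_H,m]$ was defined to capture.

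First I would set up the phase preprocessing. At the start of a phase we know (offline) the sequence of $q$ updates, hence the set $Q$ of vertices that will ever be active in $Q$ during the phase (size $\le q$) and the set of high vertices (size $O(\Delta)$). For type-(a$'$) edges we only ever need $C[u,v]$ for $u$ high; so form the $\Delta\times n$ Boolean matrix $A$ whose rows are high vertices of $Q$, columns are components $\gamma$ of $P$, with $A[u,\gamma]=1$ iff $u\gamma\in\Gamma$ and $\gamma$ is low, and the $n\times q$ Boolean matrix $B$ with $B[\gamma,v]=1$ iff $v\gamma\in\Gamma$. Then $(AB)[u,v]>0$ iff some low component is adjacent to both $u$ and $v$. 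The number of $1$'s in $A$ is $O(m_H)$ (edges at high vertices) and in $B$ is $O(m)$, so this product costs $O(M[\Delta,n,q\,|\,m_H,m])$; this dominates the rest of the preprocessing ($\Ot(m)$ to build $\Gamma$, $\Ot(\cdot)$ to build the decremental-connectivity structure on $P$ and the polylog edge-update connectivity structure on $G^*$), giving the claimed $O(M[\Delta,n,q\,|\,m_H,m])$ preprocessing and, amortized over $q$ updates in the phase, $\Ot(M[\Delta,n,q\,|\,m_H,m]/q)$ per update. A subtlety: a deletion inside $P$ splits a component and can turn a high component low or split it into smaller pieces, changing which $C$-entries are nonzero mid-phase. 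I would handle this exactly as in Theorem~\ref{conn}: charge the rebuilding of affected $C$-rows and type-(a$'$) edges to the smaller-subcomponent-splitting argument and the once-per-phase high$\to$low transition, so all of it is absorbed into the preprocessing bound for the phase; the matrix product is recomputed only at phase boundaries, and within a phase we maintain $C$ under the (polynomially few per phase) structural events by the same direct $\Ot(\Delta^2)$-style bookkeeping, which is dominated by the per-update charge since $\Delta\le q$ and $\Delta^2 \le \Delta q \le M[\Delta,n,q\,|\,\cdot,\cdot]/q \cdot q / \Delta \cdots$ — more simply, $\Delta\le q$ ensures $\Delta^2\le \Delta q = O(M[\Delta,n,q]/1)$ is not the right comparison, so I would instead just observe $\Delta^2$-type costs per phase total $O(q\Delta)\le O(q^2)$ and these are subsumed because $M[\Delta,n,q\,|\,m_H,m]\ge \Omega(\Delta q)$ trivially (you must read every output entry). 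The low-vertex update cost $\Ot(\deg(u))$ and the high-vertex update cost $\Ot(q)$, the query cost $\Ot(\Delta)$, and the correctness ("justification") argument are then verbatim from Lemma~\ref{conn-deg}: high components handled by (b), two low vertices by (b$'$), and a high vertex adjacent to a low component by (a$'$), with concatenation closing the path.

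The main obstacle is getting the accounting of the mid-phase structural changes to $P$ to genuinely fit inside the preprocessing budget when we no longer recompute $C$ online with the luxury of $\Ot(\Delta^2)$ per event — i.e.\ making precise that the only genuinely expensive operation, the $C$-matrix, is needed just once per phase, and that every incremental correction to it (from component splits) is either charged to a splitting potential (total $\Ot(m)$ per phase) or to the one-time high$\to$low demotion of a vertex (total $\Ot(m\Delta)$ per phase), both already paid for by $M[\Delta,n,q\,|\,m_H,m]\ge \Omega(m)$-type lower bounds on the preprocessing. Everything else is a routine transcription of the proof of Lemma~\ref{conn-deg}, so I would keep the write-up terse: "the data structure and its justification, queries, and per-vertex update costs are exactly as in Lemma~\ref{conn-deg}, except that the matrix $C[\cdot,\cdot]$ restricted to rows indexed by high vertices is computed at the start of each phase by one Boolean matrix multiplication costing $O(M[\Delta,n,q\,|\,m_H,m])$," and then spend the remaining space only on the charging argument for the intra-phase updates to $C$ and the amortization over the $q$ updates of the phase.
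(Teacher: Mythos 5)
Your proposal correctly identifies the right matrix product for computing $C$ in batch ($\Delta\times n$ times $n\times q$, with sparsity $m_H$ and $m$), but it misses the paper's central trick, and as a result your charging argument has a genuine hole.

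The paper's proof does \emph{not} keep $P$ deletion-only. It exploits the offline assumption to make $P$ (and a superset $Q_0\supseteq Q$ of size $O(q)$) completely \emph{static} within a phase: since the $q$ low-vertex updates of the next phase are known in advance, at each phase boundary one puts those $O(q)$ low vertices, together with all $O(\Delta)$ high vertices, into $Q_0$, and lets $P$ be everything else. Then every update in the phase touches only $Q\subseteq Q_0$; no vertex of $P$ is ever deleted mid-phase, so $\Gamma$, the components of $P$, and the matrix $C[\cdot,\cdot]$ never need to be touched after the one matrix multiplication at the start of the phase. (As a further simplification that you also miss, with $P$ static the high/low dichotomy on components can be dropped entirely --- the paper declares all components low and has no type-(b) edges --- because the product $AB$ is paid for directly by $M[\cdot]$ and there is no longer any reason to treat large components separately.)

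Your version instead keeps $P$ deletion-only and tries to absorb the incremental maintenance of $C$ under $P$-splits into the preprocessing budget. That does not work. In Lemma~\ref{conn-deg} those costs are genuinely $\Ot(m\Delta)$ per phase (the high$\to$low demotions alone contribute $\Ot(m\Delta)$, and you explicitly cite this figure), whereas your per-phase preprocessing budget is only $O(M[\Delta,n,q\,|\,m_H,m])$. You assert this budget dominates via ``$M[\Delta,n,q\,|\,m_H,m]\ge\Omega(m)$-type lower bounds,'' but $\Omega(m)$ is far short of $\Omega(m\Delta)$, and in the parameter regimes the next theorem actually uses ($\Delta=q^{\alpha}$, $q\ll m$, matrices computed with Fact~\ref{mult}) the quantity $M[\Delta,n,q\,|\,m_H,m]$ is deliberately chosen to be \emph{much smaller} than $m\Delta$ --- that is the entire point of passing to the offline setting. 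Your own hedging in the middle of the write-up (the ``is not the right comparison'' aside) is a symptom of this: the accounting genuinely cannot be made to close without the static-$P$ idea. So the proposal needs to be revised to reset $P$ and $Q_0$ at phase boundaries using the lookahead, after which ``Deletions in $P$ do not occur'' and the rest of the proof becomes the routine transcription you describe.
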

\begin{proof}
We divide the update sequence into phases, each consisting of $q$
low-vertex updates.  
The active vertices are partitioned into two sets $P$ and $Q$,
with $Q\subseteq Q_0$,
where $P$ and $Q_0$ are static and $Q$ undergoes both insertions
and deletions.
Each vertex insertion/deletion is done to $Q$.  At the end of each phase,
we reset $Q_0$ to hold all $O(\Delta)$ high vertices plus
the low vertices involved in the updates of the next phase,
reset $P$ to hold all active vertices not in $Q_0$, and 
reset $Q$ to hold all active vertices in $Q_0$.  
Clearly, $|Q|\le |Q_0|=O(q)$.

The data structure is the same as the one in the proof of
Lemma~\ref{conn-deg}, with one key difference: we only maintain the
value $C[u,v]$ when $u$ is a high vertex in $Q_0$ and $v$
is a (high or low) vertex in $Q_0$.  Moreover, we do not need to
distinguish between high and low components, i.e., all components
are considered low.

During preprocessing of each phase, we can now
compute $C[\cdot,\cdot]$ by matrix multiplication
in time $O(M[\Delta,n,q\,|\,m_H, m])$, since there are $O(\Delta)$
choices for the high vertex $u$ and $O(q)$ choices for
the vertex $v\in Q_0$.  The amortized cost per low-vertex update for
this step is $O(M[\Delta,n,q\,|\,m_H, m]/q)$.

Updating a high vertex $u$ in $Q$ now requires updating $O(q)$ edges of
types~(a$'$) and (c) (there are no edges of type~(b) now).
The cost is $\Ot(q)$.

Updating a low vertex $u$ in $Q$ requires updating $O(\Delta)$ edges of
type~(a$'$), and $O(\deg(u))$ edges of types (b$'$) and (c) in $G^*$.
The cost is $\Ot(\deg(u))$.

Deletions in $P$ do not occur now.
\end{proof}

\subsection{Sparse and dense rectangular matrix multiplication}

Sparse matrix multiplication can be reduced to multiplying
smaller dense matrices, by using a 
``high-low'' trick~\cite{alon97triangles}.  Fact~\ref{mult}(i) below can 
be viewed as
a variant of \cite[Lemma~3.1]{chan02subgraph} and a result
of Yuster and Zwick~\cite{yuster04FMM}---incidentally, this fact is 
sufficiently powerful
to yield a simple(r) proof of Yuster and Zwick's sparse
matrix multiplication result, when
combined with known bounds on dense rectangular matrix
multiplication.  Fact~\ref{mult}(ii) below states one known
bound on dense rectangular matrix multiplication which we will use.

\begin{fact}\label{mult}\
\begin{myitemize}
\item[\rm (i)] For $1\le t\le m_1$,
we have $M[n_1,n_2,n_3\,|\,m_1,m_2] = O(M[n_1,m_1/t,n_3] + m_2t)$.
\item[\rm (ii)] 
Let $\alpha=0.294$.  If $n_1\le\min\{n_2,n_3\}^\alpha$, then
$M[n_1,n_2,n_3]=\Ot(n_2n_3)$.
\end{myitemize}
\end{fact}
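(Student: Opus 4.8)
The plan is to prove the two parts independently. Part~(i) is the substantive one and is a textbook ``high--low'' degree-splitting argument; part~(ii) is a black-box reduction to a known bound on unbalanced rectangular matrix multiplication.

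For part~(i), I would view the instance as a tripartite graph with classes $V_1,V_2,V_3$ of sizes $n_1,n_2,n_3$, with $m_1$ edges between $V_1$ and $V_2$ and $m_2$ edges between $V_2$ and $V_3$; the task is to decide for each $(u,v)\in V_1\times V_3$ whether $u$ and $v$ have a common neighbor in $V_2$. Call $w\in V_2$ \emph{heavy} if it has more than $t$ neighbors in $V_1$ and \emph{light} otherwise; since only $m_1$ edges touch $V_1$, there are at most $m_1/t$ heavy vertices. Restrict $A$ to its heavy columns and $B$ to the corresponding heavy rows, pad each to exactly $m_1/t$, and multiply the resulting $n_1\times(m_1/t)$ and $(m_1/t)\times n_3$ Boolean matrices; this detects all common neighbors that are heavy, in time $O(M[n_1,m_1/t,n_3])$, which already dominates the $n_1 n_3$ cost of writing the output. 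For the light part, loop over every edge $(w,v)$ with $w$ light and $v\in V_3$, enumerate the $\le t$ neighbors $u$ of $w$ in $V_1$, and mark $(u,v)$; the total work is $\sum_{w\text{ light}}\deg_{V_1}(w)\,\deg_{V_3}(w)\le t\sum_w\deg_{V_3}(w)=O(m_2 t)$. Taking the OR of the two contributions yields the claimed $O(M[n_1,m_1/t,n_3]+m_2 t)$ bound.

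For part~(ii), I would invoke Coppersmith's theorem that for a suitable constant $\alpha$ (one may take $\alpha=0.294$) an $n\times n^\alpha$ by $n^\alpha\times n$ Boolean product can be computed in $\Ot(n^2)$ time (the arithmetic-to-Boolean conversion costs only logarithmic factors, absorbed by $\Ot$), and then reduce the general statement to it by tiling. Writing $s:=n_1^{1/\alpha}$, the hypothesis $n_1\le\min\{n_2,n_3\}^\alpha$ gives $n_1=s^\alpha$ and $s\le n_2,n_3$; partition $V_2$ into $O(n_2/s)$ groups of size $\le s$ and $V_3$ into $O(n_3/s)$ groups of size $\le s$, compute each corresponding $s^\alpha\times s$ by $s\times s$ block product in $\Ot(s^2)$, and sum over the $O((n_2/s)(n_3/s))$ block pairs to obtain $\Ot(n_2 n_3)$.

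I do not expect a genuine obstacle: part~(i) is a one-parameter threshold argument and part~(ii) is a citation plus a tiling. The only points requiring care are that the light-vertex bookkeeping really costs $O(m_2 t)$ and not an extra $\Omega(n_1 n_3)$ term (it does not, since we only touch pairs $(u,v)$ joined through a present $V_2$-vertex by actual edges), and that every block in the tiling of part~(ii) still satisfies the ``thin'' hypothesis of Coppersmith's bound (it does, as each block has inner and right dimensions exactly $s=n_1^{1/\alpha}$).
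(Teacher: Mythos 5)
Your proof is correct and follows essentially the same route as the paper's: the same heavy/light split of $V_2$ by $V_1$-degree with threshold $t$ for part~(i), and the same tiling into $n_1^{1/\alpha}$-sized blocks for part~(ii). The only cosmetic differences are attribution (the paper cites Huang--Pan's $M[n,n^\alpha,n]=\Ot(n^2)$ rather than Coppersmith's original bound, though both give $\alpha=0.294$) and that the paper phrases the tiling as a single inequality rather than spelling out the block schedule.
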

\begin{proof}
For (i), consider 
the tripartite graph setting with vertex classes $V_1,V_2,V_3$.
Call a vertex in $V_2$ {\em high\/} if it is incident to at least
$t$ vertices in $V_1$, and {\em low\/} otherwise.  There are at
most $O(m_1/t)$ high vertices.  For each $u\in V_1$ and $v\in V_3$,
we can determine whether $u$ and $v$ are adjacent to a common
high vertex, in $O(M[n_1,m_1/t,n_3])$ total time.  On the other hand,
we can enumerate all $(u,v)\in V_1\times V_3$ such that
$u$ and $v$ are adjacent to a common
low vertex, in $O(m_2t)$ time, by examining each edge $wv$ with
$(w,v)\in V_2\times V_3$ and each of the at most $t$ neighbors 
$u\in V_1$ of $w$.

For (ii),
Huang and Pan~\cite{huang98FMM} have shown that 
$M[n^\alpha,n,n]=M[n,n^\alpha,n]=\Ot(n^2)$.  Thus,
\[ M[n_1,n_2,n_3] ~=~ O(\up{n_2/n_1^{1/\alpha}}\cdot \up{n_3/n_1^{1/\alpha}}
\cdot M[n_1,n_1^{1/\alpha},n_1^{1/\alpha}])
~=~ O(\up{n_2/n_1^{1/\alpha}}\cdot \up{n_3/n_1^{1/\alpha}}\cdot
        n_1^{2/\alpha}).
\]

\vspace{-\bigskipamount}
\end{proof}

\subsection{Putting it together}

We now present our offline result for dynamic geometric connectivity
using Lemma~\ref{conn-off}.  Although we also use Property~\ref{prop:geom},
the design of the key graph $G$ is quite
different from the one in the proof of
Theorem~\ref{geomconn}.  For instance, the size of the graph is larger
(and no longer $\Ot(n)$), but the number of edges incident to high
vertices remains linear; furthermore, each object update triggers
only a constant number of vertex updates in the graph.  All the details 
come together in the analysis to lead to some intriguing
choices of parameters.

\begin{theorem}\label{geomconn-off}
Assume $0<b\le 1$.  Let $\alpha=0.294$.
We can maintain a collection of objects in 
amortized time $\Ot(n^{\frac{1+\alpha-b\alpha}{1+\alpha-b\alpha/2}})$ 
for offline updates and answer connectivity queries in time 
$\Ot(n^{\frac{\alpha}{1+\alpha-b\alpha/2} })$.
\end{theorem}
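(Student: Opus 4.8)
The plan is to follow the template of Theorem~\ref{geomconn}: divide the update sequence into phases, split the objects into a deletion-only set $X$ (processed at phase boundaries via Lemma~\ref{block}) and an insertion/deletion set $Y$, and build one intermediate subgraph-connectivity graph $G$ whose vertices are the objects of $X\cup Y$, the components of $X$, and the canonical subsets of the current phase. The edges are of the same flavour as types (a)--(e) there: link each component of $X$ to its objects, each canonical subset to its objects in $X$, each $z\in Y$ to the subsets in $\calC_z$ (and ``assign'' $z$ to them), link intersecting pairs in $Y$, and keep a canonical subset active iff it is assigned an object of $Y$. The justification of correctness is identical: objects in an active subset are all connected through the object(s) assigned to it. The new feature, exploiting that updates are offline, is to classify the canonical-subset vertices as \emph{high} or \emph{low} by a size threshold and feed $G$ to the offline degree-sensitive structure of Lemma~\ref{conn-off} instead of Lemma~\ref{conn-deg}. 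The key facts we need are that the high subsets number only $\Ot(\Delta)$ (more precisely, subsets of size exceeding $n/\Delta$, of which there are $\Ot(\Delta^{1-b})$ per query set $\calC_z$, and $\Ot(\Delta)$ globally) and that the edges incident to high subsets total $\Ot(n)$ --- so $m_H = \Ot(n)$ --- while the full graph has $m=\Ot(n + y n^{1-b})$ edges; for $y = n^b$ this is $m = \Ot(n)$, but in the offline case we will want a \emph{larger} phase length $y$, so $m$ becomes genuinely superlinear, which is fine for Lemma~\ref{conn-off}.

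The key step that differs from Theorem~\ref{geomconn} is the cost accounting for an update of an object $z\in Y$, using the offline bound $\Ot(M[\Delta,n,q\,|\,m_H,m]/q + \deg(u))$ for low vertices and $\Ot(q)$ for high vertices. First I would bound the preprocessing/matrix-multiplication term: plug $m_H=\Ot(n)$, $m=\Ot(n+yn^{1-b})$, and the phase parameters into $M[\Delta,n,q\,|\,m_H,m]$ and apply Fact~\ref{mult}(i) with an appropriately chosen splitting threshold $t$, then collapse the resulting dense rectangular product $M[\Delta,n/t,q]$ with Fact~\ref{mult}(ii), which requires checking the side-condition $\Delta \le \min\{n/t,q\}^\alpha$ --- this is where the exponent $\alpha=0.294$ enters. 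Each object update in $Y$ triggers $\Ot(n^{1-b})$ type-(c) edge insertions plus possible status changes of $\Ot(n^{1-b})$ subsets, but in Lemma~\ref{conn-off} a low-vertex update only costs $\Ot(\deg(u))$ on top of the amortized matrix term, and a high-vertex update costs $\Ot(q)$; so I separate the $\Ot(\Delta^{1-b})$ high-subset status changes (costing $\Ot(\Delta^{1-b}\cdot q)$) from the low-subset changes (costing their total degree, $\Ot(n/\Delta^b)$ by Property~\ref{prop:geom}(ii)), exactly as in Theorem~\ref{geomconn} but with $q$ replacing one factor of the degree bound for high vertices. Deletions in $X$ cost $\Ot(1)$ amortized by Lemma~\ref{block} plus $\Ot(n/y)$ amortized type-(a) edge updates by the smaller-from-larger splitting argument.

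The finale is then a three-term optimization: the amortized cost per update will come out to something of the form $\Ot\big(M[\Delta,n,q\,|\,n,m]/q + n^{1-b}\Delta^2/\text{(something)} + n/\Delta^b + \dots\big)$ with $q\approx y\approx n^b\cdot(\text{correction})$, and I would balance the matrix term against the degree-sensitive terms by choosing $\Delta$ as a power of $n$ and $y$ (equivalently $q$) as another power of $n$, arriving at the stated exponent $\frac{1+\alpha-b\alpha}{1+\alpha-b\alpha/2}$ for the update time and $\frac{\alpha}{1+\alpha-b\alpha/2}$ for the query time $\Ot(\Delta)$. I expect the main obstacle to be the bookkeeping in this optimization: there are several competing terms (the rectangular-FMM preprocessing cost, the $\Ot(q)$-per-high-subset-status-change cost, the $\Ot(n/\Delta^b)$ low-subset total-degree cost, and the $\Ot(n/y)$ component-splitting cost), and getting all of them to balance simultaneously forces the somewhat unusual joint choice of $\Delta$ and $y$ and a careful verification that the side-condition of Fact~\ref{mult}(ii) holds at the optimum --- in particular that $\Delta$ does not exceed $q^\alpha$, which constrains how large a phase we may take. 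A secondary subtlety is making sure $m_H=\Ot(n)$ genuinely holds: the edges incident to a high canonical subset $C$ are the $|C|$ type-(b) edges to its objects plus the type-(c) edges from the $Y$-objects assigned to $C$, and since $\sum_C |C| = \Ot(n)$ over all canonical subsets and each of the $\le y$ objects in $Y$ is assigned to only $\Ot(n^{1-b})$ subsets, one must check that restricting to \emph{high} subsets and to the current phase keeps this linear, which it does because the type-(b) contribution is globally $\Ot(n)$ and we can afford to count type-(c) edges to high subsets as part of the ordinary (non-$m_H$) edge budget if needed, or bound them directly.
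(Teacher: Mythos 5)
Your high-level skeleton is right — phases with a deletion-only $X$ and a small $Y$, a degree-sensitive offline subgraph-connectivity structure (Lemma~\ref{conn-off}) fed with a graph $G$ whose canonical-subset vertices are classified high/low by a size threshold, and Fact~\ref{mult} to collapse the preprocessing matrix product — but the graph $G$ you propose does not actually yield the stated bound, and the gap is structural rather than a matter of bookkeeping.

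The paper's $G$ is not the graph of Theorem~\ref{geomconn} with a high/low label slapped on. It \emph{deletes the low canonical subsets as vertices entirely} and replaces them with direct type-(d) edges from each $z\in Y_0$ to every object lying in the small subsets of $\calC_z$. (Correspondingly, the vertex set uses a static $Y_0\supseteq Y$ — all objects touched during the phase, known offline — so that every edge of $G$ is fixed at phase start and an object update is a pure vertex activation, never an edge insertion.) This change is what makes the whole thing go: with it, each object update is a \emph{single} low-vertex update of $z$ (with $\deg(z)=\Ot(n/\Delta^b)$ by Property~\ref{prop:geom}(ii)) plus $\Ot(\Delta^{1-b})$ high-vertex status changes. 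In your design, small canonical subsets remain vertices, so an object update can flip the active status of $\Ot(n^{1-b})$ of them — that is $\Ot(n^{1-b})$ low-vertex updates per object update. Lemma~\ref{conn-off} charges each low-vertex update an amortized $\Ot(M[\Delta,n,q\,|\,m_H,m]/q)$ for the per-phase matrix preprocessing, so either you pay $\Ot(n^{1-b}\cdot M/q)$ per object update (the inner $q$ stays aligned with the outer phase), or you stretch the inner phase to $q'=qn^{1-b}$ low-vertex updates, at which point the $\Ot(q')$ cost of a high-vertex update becomes $\Ot(\Delta^{1-b}q n^{1-b})$ per object update. Both options lose a factor of roughly $n^{1-b}$ and do not balance to the claimed exponent. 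Your accounting line ``high-subset status changes cost $\Ot(\Delta^{1-b}\cdot q)$, low-subset changes cost their total degree $\Ot(n/\Delta^b)$'' silently drops the matrix charge on the $\Ot(n^{1-b})$ low-vertex updates and conflates the inner and outer phase lengths.

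Two secondary issues. First, the correct graph also needs the constraint $q\le n/\Delta^{1-b}$ so that $m_H=\Ot(n+q\Delta^{1-b})=\Ot(n)$; your remark that type-(c) edges into high subsets could be ``counted as part of the ordinary edge budget if needed'' is not an option — $m_H$ is by definition the number of edges incident to high vertices. Second, $m$ for the paper's $G$ is $\Ot(qn/\Delta^b)$ (dominated by the type-(d) edges), not $\Ot(n+yn^{1-b})$; this enters Fact~\ref{mult}(i) through the $m_2t$ term, and the choice $t=\Delta^{b/2}$ that balances $nq/t$ against $mt=qnt/\Delta^b$ is part of what produces the final exponent, so getting $m$ right is load-bearing, not cosmetic. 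In short: introduce $Y_0$, keep only subsets of size $>n/\Delta$ as vertices (exactly $\Ot(\Delta)$ of them, matching the $\Delta$ in $M[\Delta,\cdot,\cdot]$), expand the small subsets into direct edges, and then the finale $\Ot(n/\Delta^{b/2}+\Delta^{1-b}q+n/q+n^{1-b})$ with $\Delta=q^\alpha$ and $q=n^{1/(1+\alpha-b\alpha/2)}$ falls out.
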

\begin{proof}
We divide the update sequence into phases, each consisting of
$q$ updates, where $q$ is a parameter satisfying
$\Delta \le q\le n/\Delta^{1-b}$.
The current objects are partitioned into two sets $X$ and $Y$,
with $Y\subseteq Y_0$
where $X$ and $Y_0$ are static and $Y$ undergoes both insertions
and deletions. 
Each insertion/deletion is done to $Y$.  At the end of each phase,
we reset $Y_0$ to hold all objects involved the objects of the next phase,
$X$ to hold all current objects not in $Y_0$, and $Y$ to hold all current
objects in $Y_0$.  Clearly, $|Y|\le |Y_0|=O(q)$.

At the beginning of each phase, we form a collection $\calC$
of canonical subsets for $X\cup Y_0$ by Property~\ref{prop:geom}.

\paragraph{The data structure.}
\begin{myitemize}
\item
We maintain the components of $X$ in the data structure from
Lemma~\ref{block}.
\item
We maintain the following graph $G$ for offline dynamic 
subgraph connectivity, where the vertices are objects of $X\cup Y_0$,
components of $X$, and canonical subsets of size exceeding $n/\Delta$:
\begin{enumerate}
\item[(a)] Create an edge in $G$ between each component of $X$ and
each of its objects.
\item[(b)] Create an edge in $G$ between each canonical subset $C$ of
size exceeding $n/\Delta$ and each of its objects in $X\cup Y$.
\item[(c)] Create an edge in $G$ between each object $z\in Y_0$ and
each canonical subset $C\in \calC_z$ of size exceeding $n/\Delta$.   
Here, we {\em assign\/} $z$ to $C$.
\item[(d)] Create an edge in $G$ between each object $z\in Y_0$
and each object in the union of the canonical subsets in $\calC_z$
of size at most $n/\Delta$.
\item[(e)] We make a canonical subset active in $G$ iff it is assigned
at least one object in $Y$.  
We make the vertices in $X\cup Y$ active, and all components active.
The {\em high\/} vertices are precisely the
canonical subsets of size exceeding $n/\Delta$; there are $\Ot(\Delta)$ such vertices.
\end{enumerate}
Note that there are $\Ot(n)$ edges of types (a) and (b),
$\Ot(q\Delta^{1-b})$ edges of type~(c) by Property~\ref{prop:geom}(i), and
$\Ot(qn/\Delta^b)$ edges of type~(d) by Property~\ref{prop:geom}(ii).
So the graph has size $m = \Ot(n + qn/\Delta^b) = \Ot(qn/\Delta^b)$, 
and the number of edges incident
to high vertices is $m_H = \Ot(n + q\Delta^{1-b}) = \Ot(n)$.
\end{myitemize}

\paragraph{Preprocessing per phase.}
Before a new phase begins, we need to update the components in $X$
as we delete $O(q)$ vertices from and insert $O(q)$ vertices to $X$.
By Lemma~\ref{block}, the cost is $\Ot(n + qn^{1-b})$.
We can then determine the edges of type (a)
in $G$ in $\Ot(n)$ time.  We can now initialize $G$
in $O(M[\Delta,n,q\,|\,n,m])$ time by Lemma~\ref{geomconn-off}.
We can charge every update operation with an amortized
cost of $\Ot(M[\Delta,n,q\,|\,n,m]/q + n/q + n^{1-b})$.

\paragraph{Update of an object $z$ in $Y$.}
We need to make a single vertex update $z$ in $G$, which has
degree $\Ot(n/\Delta^b)$ by Property~\ref{prop:geom}(ii).
Furthermore, we may have to change the status of as many
as $\Ot(\Delta^{1-b})$ high vertices by Property~\ref{prop:geom}(i).
According to Lemma~\ref{geomconn-off},
the cost of these vertex updates is 
$\Ot(M[\Delta,n,q\,|\,n,m]/q + n/\Delta^b + \Delta^{1-b}q)$.

\paragraph{Finale.}
By Fact \ref{mult}, assuming that
$\Delta\le q^\alpha$ and $q\le n/t$, we have
$M[\Delta,n,q\,|\,n,m]\ =\ O(M[\Delta,n/t,q] + mt)
\ =\ \Ot(nq/t + nqt/\Delta^b)$.
Choosing $t=\Delta^{b/2}$ gives $\Ot(nq/\Delta^{b/2})$.

The overall amortized cost per update operation is thus
$\Ot(n/\Delta^{b/2} + \Delta^{1-b}q + n/q + n^{1-b}).$
Set $\Delta=q^\alpha$ and $q=n^{\frac{1}{1+\alpha-b\alpha/2}}$ and
the result follows.  (Note that indeed 
$\Delta \le q\le n/\Delta^{1-b}$ and $q\le n/t$ for these choices of parameters.)
\end{proof}

Compared to Theorem~\ref{geomconn}, the dependence on $b$ of the
exponent in the update bound is only $1-\Theta(b)$ rather than 
$1-\Theta(b^2)$.
The bound is better, for example, for $b\le 1/4$.  

\section{Open Problems}

Our work opens up many interesting directions for further
research.  For subgraph connectivity, an obvious question is
whether the $\Ot(m^{2/3})$ vertex-update bound
can be improved (without or with FMM);
as we have mentioned, improvements beyond $\sqrt{m}$ without FMM are
not possible without a breakthrough on the triangle-finding problem.  
An intriguing question is whether for dense graphs we can
achieve update time sublinear in $n$, i.e., $O(n^{1-\eps})$ 
(or possibly even sublinear in the degree).  

For geometric connectivity, it would be desirable to determine
the best update bounds for specific shapes such as line
segments and disks in two dimensions.
Also, {\em directed\/} settings of geometric connectivity
arise in applications and are worth studying;
for example, when sensors' transmission ranges are balls
of different radii or wedges, a sensor may lie in another sensor's
range without the reverse being true.

For both subgraph and geometric connectivity, we can reduce the 
query time at the expense of increasing the update
time, but we do not know whether constant or polylogarithmic query time 
is possible with sublinear update time in general 
(see \cite{afshani06connectivity} for a result
on the 2-dimensional orthogonal
special case).  
Currently, we do not know how to obtain our update bounds with
linear space (e.g., Theorem~\ref{conn} requires $\Ot(m^{4/3})$ space), nor 
do we know how to get good
worst-case update bounds (since the known polylogarithmic results
for connectivity under edge updates are all amortized).
Also, the queries we have
considered are about connectivity between two vertices/objects.
Can nontrivial results be obtained for richer queries such as 
counting the number of connected components 
(see \cite{afshani06connectivity} on the 2-dimensional orthogonal
case), or perhaps shortest paths or minimum cut?







{
\bibliographystyle{plain} 

}

\end{document}